\pgfplotsset{compat=1.17}
\newtheorem{theorem}{Theorem}
\theoremstyle{definition}
\newtheorem{definition}[theorem]{Definition}
\newtheorem{lemma}[theorem]{Lemma}
\theoremstyle{remark}
\newtheorem*{remark}{Remark}
\definecolor{mypurple}{RGB}{138, 40, 199}
\definecolor{mygreen}{RGB}{9, 150, 79}
\begin{document}

\title{Composable privacy of networked quantum sensing}

\author{Naomi R. Solomons}
\email{naomi.solomons@lip6.fr}
\orcid{0000-0002-3332-5560}
\affiliation{LIP6, CNRS, Sorbonne Université, 4 Place Jussieu, F-75005 Paris, France}
\author{Damian Markham}
\orcid{0000-0003-3111-7976}
\affiliation{LIP6, CNRS, Sorbonne Université, 4 Place Jussieu, F-75005 Paris, France}
\maketitle

\begin{abstract}
  Networks of sensors are a promising scheme to deliver the benefits of quantum technologies in coming years, offering enhanced precision and accuracy for distributed metrology through the use of large entangled states. Recent work has additionally explored the privacy of these schemes, meaning that local parameters can be kept secret while a joint function of these is estimated by the network. In this work, we use the abstract cryptography framework to relate the two proposed definitions of quasi-privacy, showing that both are composable, which enables the protocol to be securely included as a sub-routine to other schemes. We give an explicit example that estimating the mean of a set of parameters using GHZ states is composably fully secure.
\end{abstract}


\section{Introduction}

Distributed parameter estimation, the problem of estimating a global function of locally held parameters, is of significant interest in quantum metrology due to the potential advantage to be gained from the use of entanglement. Specifically, in this paper we will consider the use of multiparty entangled states (e.g. GHZ states), shared across a network of $n$ members, to estimate some linear function of parameters held by the nodes~\cite{proctor2018multiparameter, rubio2020quantum}. Quantum advantage arises due to the possible quadratic improvement in precision over classical or local methods, with applications such as time synchronisation~\cite{komar2014quantum} or medical imaging~\cite{eldredge2018optimal}. It is also a central use case proposed for the quantum internet~\cite{wehner2018}, due to the decentralised nature of the protocol and the minimal quantum requirements on end users.

As proposed in~\cite{shettell2022private}, alongside the metrological advantage of using quantum states, there is an inherent cryptographic guarantee. The goal here is that each local parameter encoded by an agent should remain private -- that is, unknown to all other agents -- whereas the desired global function is revealed to every agent. More precisely, the definition of privacy that we consider is that no dishonest member of the network, alone or in collaboration with other dishonest parties, should be able to learn any more information about the overall set of parameters than that which could be learned from knowledge of their own parameter(s) and the target function itself~\cite{bugalho2025private, hassani2025privacy,junior2025privacy,ho2024quantum}.

In this work, we consider privacy within the framework of abstract (or constructive) cryptography~\cite{maurer2011constructive}. This paradigm enables us to show composable security, meaning that the protocol can be composed with other secure protocols, used multiple times, or used as part of a larger scheme. In comparison to the game-based framework, which considers specific attacks, we model the protocol as a \textit{resource} and show its interaction with other abstract systems, which provides a method to comprehensively analyse what information can be leaked by the protocol. Our work applies to existing privacy definitions, which can describe a variety of possible schemes, using different input states or quantum dynamics, and therefore is an important step towards securely implementing these schemes in the real world.

\paragraph{Our results.} The main results of this work consider the two previous methods of quantifying privacy considered in the literature. We show that the quasi-privacy definition, $\mathcal{P}(\mathcal{Q}, \mathbf{a})$, introduced in~\cite{bugalho2025private} is composably $\varepsilon$-secure, with $\varepsilon = \sqrt{1 - \mathcal{P}^2(\mathcal{Q}, \mathbf{a})}$ (Thm.~\ref{thm: luis defn composably private}). We then show that the quasi-privacy definition introduced in~\cite{hassani2025privacy} is composably $\varepsilon$-secure up to a constant factor (Thm.~\ref{thm: hassani defn composably private}). \\

The organisation of this paper is as follows: in Section~\ref{sec: background} we give a brief overview of networked parameter estimation with quantum sensors, and composable security proofs in abstract cryptography. In Section~\ref{sec: mean estimation}, we illustrate these by showing that the use of GHZ states to jointly estimate the mean of a set of parameters is composably secure. In Section~\ref{sec: general fn} we consider more general states and functions, and give a condition for quantum dynamics to be composably private, which we relate to the privacy definitions in~\cite{bugalho2025private} and~\cite{hassani2025privacy} in Sections~\ref{sec: luis defn} and~\ref{sec: majid defn} respectively. In Section~\ref{sec: state verification} we consider composition with state verification schemes. Together, these give a composably secure parameter estimation scheme which can be used even in networks with untrusted sources or channels, or dishonest parties.

\section{Background} \label{sec: background}

\subsection{Networked parameter estimation}

\begin{figure}[ht]
    \centering
    \includegraphics{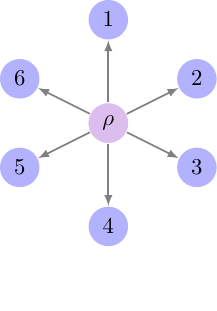}
    \hspace{0.2cm}
    \includegraphics{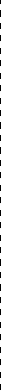}
    \hspace{0.2cm}
    \includegraphics{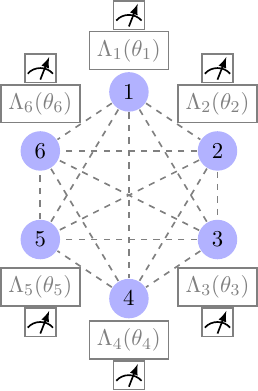}
    \hspace{0.2cm}
    \includegraphics{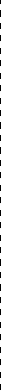}
    \hspace{0.2cm}
    \includegraphics{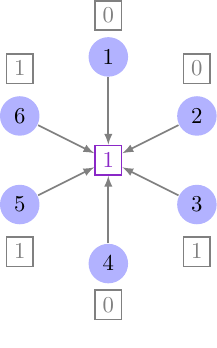}
    \caption{One round of the quantum parameter estimation protocol, adapted from~\cite{dejong2025anonymous}. A state $\rho$ is distributed across the network, then the parameters are encoded through the channels $\Lambda_\mu (\theta_\mu)$. Each party measures its own part of the state, and return the measurements to the rest of the network. The output at each round is collected and used to estimate $f(\boldsymbol{\theta})$.}
    \label{fig: parameter estimation}
\end{figure}

The basic protocol that we will consider is private quantum parameter estimation~\cite{shettell2022private,hassani2025privacy,bugalho2025private}. We consider an $n$-node network, where each member $\mu$ of the network holds a private local parameter $\theta_\mu$, and they wish to jointly estimate some linear function $f(\boldsymbol{\theta}) = \mathbf{a}\cdot\boldsymbol{\theta}$ (where we use $\boldsymbol{\theta}$ to represent the vector of $\{ \theta_\mu \}$). The quantum implementation is as follows (this is shown in Fig.~\ref{fig: parameter estimation}; see also~\cite{hassani2025privacy}, Fig. 1, or~\cite{bugalho2025private}, Fig. 2):
\begin{enumerate}
    \item A source distributes a state $\rho$ across the nodes of the network.
    \item Each node encodes their parameter by a quantum channel $\Lambda_\mu(\theta_\mu)$.
    \item Each node makes a measurement on their state and announces their outcome.
    \item Steps 1-3 are repeated $N$ times.
    \item Each node, or a co-ordinator, uses the measurement outcomes across all repetitions to make an estimate $\hat{f}(\boldsymbol{\theta})$ of the desired function.
\end{enumerate}

For example, in the case that $f(\boldsymbol{\theta})$ is the mean, $\Bar{\theta} = \frac{1}{n}\sum_\mu \theta_\mu$, this could be realised as follows:
\begin{itemize}
    \item The source distributes a GHZ state, $\frac{1}{\sqrt{2}}\left(\ket{0}^{\otimes n} + \ket{1}^{\otimes n}\right)$.
    \item The parameters are encoded through the unitary $\ketbra{0}{0} + \mathrm{e}^{i \theta_\mu} \ketbra{1}{1} = \mathrm{e}^{-i \theta_\mu \sigma_Z/2}$ (equality is up to a global phase).
    \item Each node measures in the $\sigma_X$ basis and announces the outcome, $o_\mu$.
    \item The probability that $\bigoplus_\mu o_\mu = 0$ is $\frac{1}{2}\left(1 + \cos(n\Bar{\theta})\right)$.
\end{itemize}

In general, the source may not be trusted, in which case a verification protocol can be used (such as~\cite{unnikrishnan2022verification, pappa2012multipartite}) to ensure that the state provided is sufficiently close to the required state (e.g. GHZ state) to guarantee an acceptable level of privacy. Other variations can be used according to further cryptographic requirements, such as one of the nodes keeping their outcomes secret or announcing a random bit in order to hide the function estimation from eavesdropping parties, or implementing further steps to carry out function estimation among an anonymous subset of participants~\cite{dejong2025anonymous}.

The privacy of this scheme has been considered in~\cite{shettell2022private, bugalho2025private, hassani2025privacy}. The key idea is that any member (or collaborating subset) of the network, or other adversary, whether or not they follow the protocol as intended, cannot access the individual parameters $\{ \theta_\mu \}$, or any function of these that cannot be calculated just from $f(\boldsymbol{\theta})$ and their own parameters. In the case that state verification is used, this can also protect against information leakage when adversaries are collaborating with the source. Further work includes different definitions of quasi-privacy when imperfect initial states are distributed, investigations of the effects of different sources of error, and classifying states which are useful for the private estimation of different functions. These proofs generally rely on the quantum Fisher information (QFI), a quantity used in quantum metrology to describe the information which can be extracted from a particular state~\cite{sidhu2020geometric}, to which we give an introduction in the appendix, Section~\ref{sec: QFI appendix}.

Privacy in distributed parameter estimation can also be achieved using classical resources~\cite{mo2016privacy, wang2023differentially}. Under sufficient cryptographic assumptions (such as sharing keys which are random to each user but sum to a known value) this can straightforwardly be achieved using local parameter estimation and secure multi-party computation. Alternatively, a large entangled state could be used to distribute shared correlations, which can then be used to encode announcements, thus obscuring individual parameters and only revealing the total parity (indeed, this is the mechanism underlying privacy in the quantum case). However, the central advantage of quantum resources in this scheme arises from enhanced precision; this is a quadratic improvement if the quantum Cramér-Rao bound is saturated (further details are given in appendix Section~\ref{sec: QFI appendix}). Therefore, the most pertinent uses of this scheme are those for which increased precision is required, without compromising the security (for example in industrial, medical or military scenarios).



\subsection{Composable security}

\begin{figure}[ht]
    \centering
    \includegraphics{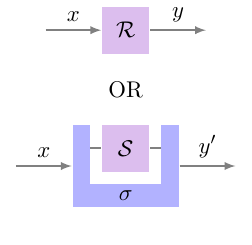}
    \hspace{0.6cm}
    \includegraphics{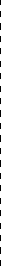}
    \hspace{0.6cm}
    \includegraphics{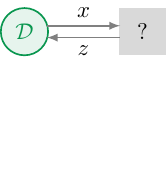}
    \caption{The interactions between systems in a constructive cryptographic proof. Given a concrete resource $\mathcal{R}$ and an ideal resource $\mathcal{S}$, a simulator $\sigma$ is constructed so that a distinguisher $\mathcal{D}$ cannot distinguish between $\sigma \mathcal{S}$ and $\mathcal{R}$. Roughly, this means that $\mathcal{R}$ is not revealing any information that is not already revealed by $\mathcal{S}$.}
    \label{fig: composable security}
\end{figure}

Composable security is a model for cryptographic security proofs that, by its construction, allows a protocol to be reused many times, or implemented as a sub-protocol in a more complicated procedure. Analysing protocols in this way is enabled by the frameworks of abstract, or constructive, cryptography~\cite{MauRen11, maurer2011constructive, portmann2014cryptographic} and universal composability~\cite{canetti2001universally}, with the former used in this work (this is in contrast with the game-based paradigm, which presents particular adversarial behaviour, and the resulting leakage of information). A useful pedagogical introduction is given in~\cite{colisson2024all}.

Composable security proofs are structured by introducing an ideal process, which specifies the desired functionality of the scheme, receiving inputs from its participants and returning appropriate outputs. The actual protocol can then securely realise the task if it is able to `emulate' the ideal process, in which case any outcomes caused by an adversary do not exceed what is inherently implied by the formal definition of the scheme. This is evaluated by considering the capability of some agent (the `distinguisher') to distinguish between the ideal and concrete implementations of the scheme. In full generality the distinguisher can act in collaboration with adversaries or corrupted protocol users, in any environment, in possession of any information that may be gained from repeated applications of the protocol, with any computational power and giving arbitrary inputs. However, the powers of the distinguisher can also be limited, making a slightly different security statement -- for example by only considering distinguishers with polynomial computational power. In this work, we make no assumptions on the limitations of the distinguisher in our proofs.

Abstract cryptography provides a formalism that is used to represent cryptographic definitions (such as programs or resources), so that certain technical details can be overlooked. Specifically, we describe systems, abstract objects with interfaces for the system to interact with its environment, which can be composed of various subsystems, and are categorised as resources, converters (including simulators or filters), or distinguishers -- an overview of these interactions is shown in Fig.~\ref{fig: composable security}. 

We use the following terminology and notation in our proof of composable security:
\begin{itemize}
    \item An ideal resource, $\mathcal{S}$, which represents the desired functionality of the protocol. A resource is a system with specified interfaces, accessible to particular users, from which it receives inputs and provides outputs. We will represent the set of interfaces with an honest user by $H$, and the set of interfaces with dishonest users (including eavesdroppers) by $D$.
    \item A concrete resource, $\mathcal{R}$, which, equipped with converters $\{\pi_i\}$ which describe the behaviour of users at the interfaces $\{ i \}$ of a resource, implement the protocol under investigation. So for example, $\pi_H \mathcal{R}$ is the resource $\mathcal{R}$, where the interfaces in the set $H$ follow the protocol $\pi$.
    \item A filter, $\Diamond$ or $\sharp$, which is a converter that may be applied to the interface of a resource to enforce honest behaviour (that is, to construct the resource as it would behave with no adversary present).
    \item A simulator, $\sigma$, which is a converter used for the purpose of the security proof.
\end{itemize}

The distinguishing advantage, with respect to a particular distinguisher $\mathcal{D}$ -- being an agent or group of agents, which may be given some restriction, such as being computationally bounded -- is defined as follows: if $\mathcal{D}$, interacting with the open interfaces of a system which may be either $\mathcal{A}$ or $\mathcal{B}$, can guess correctly with probability $p^{\mathcal{D}}(\mathcal{A}, \mathcal{B})$ with which system it is interacting, its advantage is:
\begin{equation}
    d^{\mathcal{D}}(\mathcal{A}, \mathcal{B}) := 2p^{\mathcal{D}}(\mathcal{A}, \mathcal{B}) - 1.
\end{equation}

This will vary according to the choice of distinguisher, so we drop the superscript $\mathcal{D}$ when we consider $d$ over all choices of computationally unbounded distinguisher, which leads to the notion of information-theoretic security. If $d(\mathcal{A}, \mathcal{B}) \leq \varepsilon$ for any choice of distinguisher, we write $\mathcal{A} \approx_{\varepsilon} \mathcal{B}$.

We are now in a position to introduce the main structure of constructive cryptography, adapted from~\cite{portmann2014cryptographic}:
\begin{definition} \label{defn: constructive cryptography}
    Let $\pi_H$ be a protocol and $\mathcal{R}_{\sharp}$ and $\mathcal{S}_{\Diamond}$ denote two filtered resources. We say that $\pi_H$ \textit{constructs} $\mathcal{S}_{\Diamond}$ \textit{from} $\mathcal{R}_{\sharp}$ \textit{within} $\varepsilon$, which we write $\mathcal{R}_\sharp \xrightarrow{\pi, \varepsilon} \mathcal{S}_\Diamond$, if the following two conditions hold:
    \begin{enumerate}
        \item Availability, or correctness:
        \begin{equation}
            \pi_H \mathcal{R} \sharp_D \approx_{\varepsilon} \Diamond_H \mathcal{S} \Diamond_D.
        \end{equation}
        \item Security: there exists a simulator $\sigma$ such that:
        \begin{equation}
            \pi_H \mathcal{R} \approx_{\varepsilon} \sigma_H \mathcal{S} \sigma_D.
        \end{equation}
        If it is clear what filtered resources $\mathcal{R}_{\sharp}$ and $\mathcal{S}_{\Diamond}$ are meant, we simply say that $\pi_H$ is $\varepsilon$-secure.
    \end{enumerate}
\end{definition}

In this definition, $\Diamond$/$\sharp$ is a filter which is applied to $\mathcal{S}$/$\mathcal{R}$ to enforce honest behaviour (note that $\Diamond_H \mathcal{S} \Diamond_D$ is the filtered resource, whereas $\mathcal{S}_\Diamond$ indicates that we will be using the resource $\mathcal{S}$ and applying the filter $\Diamond$ when appropriate for the proof). The subscript $H$/$D$ indicates that we apply these filters to the interfaces of $H$/$D$. If $\varepsilon = 0$ we omit the subscript and write e.g. $\pi_H \mathcal{R} \sharp_D \approx \Diamond_H \mathcal{S} \Diamond_D$.

This is shown to be a composable definition of security~\cite{maurer2011constructive}, hence this will be the method used in this work. Informally, we can understand this definition (for the case of $\varepsilon = 0$) the following way: $\mathcal{S}$ is an ideal protocol that we would like to implement, which is well-defined so that it is clear exactly how much information can be leaked to the environment or members of the scheme (for example, in distributed parameter estimation, we allow that the function of parameters can be made publicly available, but not the individual parameters). $\pi_H \mathcal{R} \pi_D$ is some realistic way of implementing this scheme, and we are interested in knowing whether it leaks any information that is not revealed by $\mathcal{S}$ if the members of $D$ follow a different behaviour. The simulator is a good way of representing this -- if we can use the information from $\mathcal{S}$ to construct all the information that an adversary would receive from $\pi_H \mathcal{R}$, then it must be the case that the open interfaces of $\pi_H \mathcal{R}$ (the information received by adversaries) do not contain any extra information. This is the security aspect of the proof.

The correctness aspect of the proof focuses on whether the concrete implementation can implement the desired functionality, assuming that it is used appropriately. However, as the realistic protocol may fall short of perfect behaviour, the filter that is applied to the ideal resource can reduce the power of $\mathcal{S}$ to what is expected from $\mathcal{R}$ (consider, for example, that in a composably secure state verification protocol, malicious adversaries can apply a `correction' to the state~\cite{colisson2024all}). In Definition~\ref{defn: constructive cryptography}, we represent this by including a filter $\Diamond_H$ on the honest interfaces of $\mathcal{S}$, which is generally omitted (including in~\cite{portmann2014cryptographic}). Similarly, we include a simulator $\sigma_H$ on the honest interfaces of $\mathcal{S}$ for the security proof. Therefore, it should be noted that we are considering a procedure which, although it does not leak any more information than $\mathcal{S}$, may not be as powerful (indeed, is only as powerful as $\Diamond_H \mathcal{S}$). 


\section{Estimation of mean} \label{sec: mean estimation}

We still start with a particular example of estimating the mean of the local parameters, and hence $\rho$ is a GHZ state, and $\mathbf{a} = \frac{1}{n}(1, ..., 1)^\mathrm{T}$. The ideal resource, $\mathcal{S}$, is then simply:
\begin{algorithm}
    \caption{Ideal resource for mean estimation, $\mathcal{S}$}\label{ideal resource mean}
\begin{enumerate}
    \item Receive the parameter $\theta_\mu \in [0, 2\pi)$ from each party.
    \item Return $\Bar{\theta} = \frac{1}{n}\sum_\mu \theta_\mu$ to each party.
\end{enumerate}
\end{algorithm}

    \begin{figure}[ht]
    \centering
     \begin{subfigure}[b]{0.45\textwidth}
    \centering
    \includegraphics{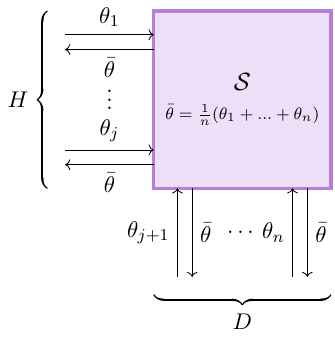}
             \caption{$\mathcal{S}$}
         \label{fig: ideal resource}
         \end{subfigure}
     \hfill
     \begin{subfigure}[b]{0.45\textwidth}
         \centering
    \includegraphics{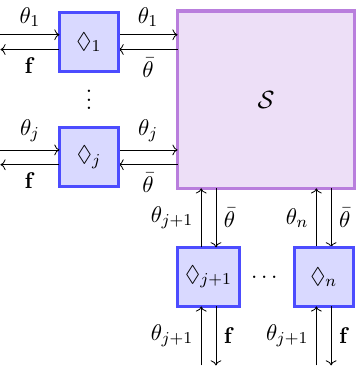}
             \caption{$\mathcal{S}_{\Diamond}$}
         \label{fig: filtered ideal resource}
    \end{subfigure}
    \caption{A representation of the ideal resource $\mathcal{S}$, without (a) and with (b) the filter $\Diamond$ applied, for $n$ parties, with left interfaces used for honest parties and interfaces below the resource for dishonest interfaces (the label is omitted on the right hand figure).}
    \end{figure}


This is shown in Fig.~\ref{fig: ideal resource}. The resource has $n$ interfaces, corresponding to the set $P$ of parties participating in the calculation, which we split into the honest set $H$ and the dishonest set $D$. We assume that the set $D$ may be collaborating, but that the members of $H$ do not. We could also consider an eavesdropper $E$, however, as the protocol does not allow for any participants to learn any information which is not made publicly available, this is not necessary -- that is, we do not need to consider whether certain information is kept secret from eavesdroppers, that is revealed to the participants. Thus, any adversarial behaviour can be modelled by the behaviour of $D$.

    \begin{figure}[ht]
    \centering
    \includegraphics{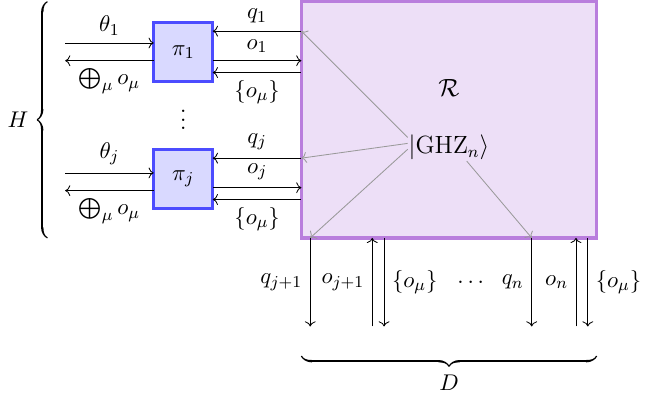}
    \caption{$\pi_H \mathcal{R}$: A representation of the concrete resource $\mathcal{R}$, where the converter $\pi_H$ corresponding to honest behaviour is applied only to the honest interfaces, $H$. We only show $N=1$, and the details of $\pi_H$ are in Alg.~\ref{honest protocol}. The label $q_\mu$ corresponds to a qubit, whereas $o_\mu$ are measurement outcomes.}
        \label{fig: concrete resource}
    \end{figure}

There are several possible variations on this resource, which will not make a significant impact on our security proof. For example, it is possible to add a mechanism by which any party can abort instead of inputting a parameter. It is also possible to slightly adjust the behaviour so that $\mathcal{S}$ receives the input $\theta_{\mu} \in [0, \pi/n)$ and returns the sum instead of the mean; as will become clear, this changes the accuracy of $\mathcal{R}$, but does not materially change the security proof.

For now, we will assume that we have access to a shared GHZ state across the network. In Section~\ref{sec: state verification}, we will consider the effects of relaxing this assumption. In this case, the concrete resource $\mathcal{R}$ has the same interfaces as $\mathcal{S}$, and can be described as in Alg.~\ref{concrete resource mean} (recall that $N$ is the number of rounds in the protocol). This could equally be considered as the composite of two resources, both the shared quantum state, and broadcasting classical bits, where we will later see how the first part can be replaced by a state verification protocol.
\begin{algorithm}
    \caption{Concrete resource for mean estimation, $\mathcal{R}$}\label{concrete resource mean}
    For $i = 1, ..., N$:
\begin{enumerate}
    \item Distribute one qubit of an $n$-qubit GHZ state to each of the interfaces.
    \item Receive from each interface the measurement outcome $o_{\mu, i}$.
    \item Return $\{ o_{\mu, i} \}_{\mu \in P}$ to every interface.
\end{enumerate}
\end{algorithm}

To interact with $\mathcal{R}$, we define the constructor $\pi$ as in Alg.~\ref{honest protocol}, which will act as both the honest protocol $\pi_H$ and the filter $\sharp_D$, modelling honest behaviour of the users. This has two interfaces, with the internal interface connecting to the resource $\mathcal{R}$, and the external interface open to the user to input their parameter. The protocol can be labelled by $\mu$ depending on the user with which it interacts. Together with $\mathcal{R}$, this is shown in Fig.~\ref{fig: concrete resource} (for $N=1$).
\begin{algorithm}
\caption{Honest protocol for mean estimation, $\mathcal{\pi}$ (for user $\mu = j$)}\label{honest protocol}
\begin{enumerate}
    \item Receive the parameter $\theta_j \in [0, 2\pi)$ from the user interface.
    \item For $i = 1, ..., N$:
\begin{enumerate}
    \item Receive a qubit from the resource interface.
    \item Implement the unitary $\hat{U}(\theta_j) = \ketbra{0}{0} + \mathrm{e}^{i \theta_j} \ketbra{1}{1}$ on the qubit.
    \item Measure the qubit in the $X$ basis, giving measurement result $o_{j, i}$.
    \item Return $o_{j, i}$ to the resource interface.
    \item Receive $\{ o_{\mu, i} \}_{\mu \in P}$ from the resource interface.
\end{enumerate}
\item Return the bitstring $\mathbf{p} \in \{ 0, 1\}^N$ to the user interface, where $p_i = \bigoplus_{\mu} o_{\mu, i}$.
\end{enumerate}
\end{algorithm}

The interfaces of both $\mathcal{S}$ and $\pi_H \mathcal{R} \pi_D$ both accept the same input (an angle), but $\mathcal{S}$ outputs a real number, whereas $\pi_H \mathcal{R} \pi_D$ outputs an $N$-bit string, and therefore a filter, $\Diamond$, is clearly required to compare these for the correctness proof. We define this in Alg.~\ref{filter mean}. The filtered resource (with $\Diamond$ applied to interfaces in $H$ and $D$) is shown in Fig.~\ref{fig: filtered ideal resource}.
\begin{algorithm}
    \caption{Filter for ideal resource, for mean estimation, $\Diamond$}\label{filter mean}
\begin{enumerate}
    \item Receive the angle $\theta_\mu$ from the user interface and return to the resource interface.
    \item Receive the angle $\Bar{\theta}$ from the resource interface.
    \item Return the bitstring $\mathbf{f} \in \{ 0, 1\}^N$ to the user interface, where $f_i$ is chosen randomly, with $\Pr(f_i = 0) = \frac{1}{2}\left(1 + \cos(n\Bar{\theta})\right)$.
\end{enumerate}
\end{algorithm}    

We now proceed to our first result, Theorem~\ref{thm: mean estimation}.
\begin{theorem} \label{thm: mean estimation}
Using the definitions of $\mathcal{S}, \mathcal{R}, \pi$ and $\Diamond$ from Algs.~\labelcref{ideal resource mean,concrete resource mean,honest protocol,filter mean}, $\pi_H$ constructs $\mathcal{S}_{\Diamond}$ from $\mathcal{R}_{\pi}$ exactly (that is, to within $\varepsilon = 0$).
\end{theorem}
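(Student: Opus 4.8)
The plan is to verify the two conditions of Definition~\ref{defn: constructive cryptography} directly and to argue that each holds with $\varepsilon = 0$: the correctness condition reduces to a distribution-matching calculation, whereas the security condition requires an explicit simulator whose construction is the real content of the proof.

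For correctness I would compare the output distributions of $\pi_H \mathcal{R} \sharp_D$ and $\Diamond_H \mathcal{S} \Diamond_D$. On the concrete side the filter $\sharp_D$ forces every party to encode with $\hat{U}(\theta_\mu)$ and measure in $X$, so in round $i$ the announced bit $p_i = \bigoplus_\mu o_{\mu,i}$ is the total GHZ parity; using the single-round probability $\Pr(p_i = 0) = \tfrac{1}{2}(1 + \cos(n\Bar{\theta}))$ already recorded in Section~\ref{sec: background}, together with the fact that a fresh GHZ state is used each round (so the $p_i$ are i.i.d.), the output is an $N$-bit string of independent bits with this bias. On the ideal side, $\mathcal{S}$ returns $\Bar{\theta}$ and the filter $\Diamond$ (Alg.~\ref{filter mean}) emits exactly an $N$-bit string of independent bits with $\Pr(f_i = 0) = \tfrac{1}{2}(1 + \cos(n\Bar{\theta}))$. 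The two distributions coincide for every input $\boldsymbol{\theta}$, so no (even unbounded) distinguisher has any advantage and the condition holds exactly.

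The security condition $\pi_H \mathcal{R} \approx_0 \sigma_H \mathcal{S} \sigma_D$ is where the work lies, and I would build the simulator around two structural features of the GHZ scheme. First, writing $\Theta_H = \sum_{\mu \in H}\theta_\mu$, after honest encoding the global state is $\tfrac{1}{\sqrt{2}}(\ket{0}^{\otimes n} + e^{i\Theta_H}\ket{1}^{\otimes n})$, whose marginal on the dishonest qubits is the parameter-independent maximally mixed state on the subspace spanned by $\ket{0}^{\otimes|D|}$ and $\ket{1}^{\otimes|D|}$; conditioned on the honest $X$-outcomes, which are uniformly random, the dishonest register collapses to $\propto \ket{0}_D + e^{i\Theta_H}(-1)^{o_H}\ket{1}_D$ (with $o_H = \bigoplus_{\mu \in H} o_\mu$ the honest parity), so the \emph{only} dependence on the honest parameters that can ever reach the adversary is through the single number $\Theta_H$. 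Second, $\Theta_H$ is recoverable from the ideal resource: since $\sigma_H$ forwards the honest parameters to $\mathcal{S}$, which returns $\Bar{\theta} = \tfrac{1}{n}(\Theta_H + \Theta_D^{\mathrm{in}})$, a joint simulator can feed $\mathcal{S}$ the dishonest inputs $0$ and read off $\Theta_H = n\Bar{\theta}$. The simulator then, in each round, prepares its own GHZ state, hands the dishonest qubits to the distinguisher — their marginal being identical to the real one, so any adversarial operations act on the same state — injects the phase $e^{i\Theta_H}$ onto one of its held ancilla qubits, which by the GHZ structure equals the true collective honest encoding, and finally measures its ancillas in $X$ to produce the uniform honest announcements with exactly the correct correlations to the dishonest register. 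Acting jointly across both interface sets, it computes $p_i = o_{H,i} \oplus o_{D,i}$ from the announcements it generates and those the distinguisher supplies, and delivers this as the honest-interface output, so that every component of the distinguisher's view is reproduced.

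The main obstacle, and the point I would treat most carefully, is showing that this reproduction is \emph{exact} against an arbitrary, in particular quantumly-coherent, adversary rather than one that measures immediately. The argument is that the simulator's phase injection and ancilla measurement act on registers disjoint from those held by the distinguisher, hence commute with any channel the adversary applies to the dishonest qubits; it therefore does not matter that the simulator only learns $\Theta_H$, and applies the phase, \emph{after} the adversary has acted or even retained its qubits as quantum memory, since the resulting joint state of (dishonest register, honest announcements) is identical to the real one for every $\boldsymbol{\theta}$ and every adversarial strategy. Combined with the i.i.d. structure across the $N$ rounds, this yields a perfect simulation and hence $\varepsilon = 0$, completing both conditions.
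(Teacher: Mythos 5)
Your proof is correct and follows essentially the same route as the paper's: correctness by matching the parity-bit distribution $\tfrac{1}{2}(1+\cos(n\bar{\theta}))$ against the filter of Alg.~\ref{filter mean}, and security via a simulator that recovers $\Theta_H = n\bar{\theta}_0$ by querying $\mathcal{S}$ with zero dishonest inputs, reconstructs the dishonest parties' state $\propto \ket{0}^{\otimes|D|} + (-1)^{h}e^{i\Theta_H}\ket{1}^{\otimes|D|}$, and keeps the announced parity consistent across interfaces. The only cosmetic difference is that you defer the honest measurements (retaining ancillas and injecting the collective phase $e^{i\Theta_H}$) whereas the paper's Alg.~\ref{dishonest simulator mean} samples the honest parity $h$ classically up front and prepares the post-measurement $|D|$-qubit state directly; the two are equivalent, with your version making the argument against coherent adversaries slightly more explicit.
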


\begin{proof}
\textbf{Correctness:} We aim to show that $\pi_H \mathcal{R} \pi_D$ and $\Diamond_H \mathcal{S} \Diamond_D$ cannot be distinguished (we are comparing Fig.~\ref{fig: concrete resource}, with the protocol $\pi$ applied to dishonest parties, to Fig.~\ref{fig: filtered ideal resource}). The open interfaces of $\Diamond_H \mathcal{S} \Diamond_D$ receive the $n$ parameters $\{ \theta_\mu \}$ and return the bitstring $\mathbf{f}$ as defined in Alg.~\ref{filter mean} (which is not fixed, but generated by a weighted coin flip). The interfaces of $\pi_H \mathcal{R} \pi_D$ receive the same input, and we must show that the output $\mathbf{p}$ is equal to $\mathbf{f}$. This is exactly the behaviour of parameter estimation~\cite{shettell2020graph, proctor2018multiparameter}. That is, the state generated at each of the $N$ repetitions of $\pi_H \mathcal{R} \pi_D$, prior to the $X$ basis measurement, is given by:
\begin{align}
    \frac{1}{\sqrt{2}} \bigotimes_\mu \hat{U} (\theta_\mu) \left(\ket{0}^{\otimes n} + \ket{1}^{\otimes n}\right) &= \frac{1}{\sqrt{2}} \left(\ket{0}^{\otimes n} + \mathrm{e}^{in\Bar{\theta}} \ket{1}^{\otimes n}\right) \\
    &= \frac{1}{2^{(n+1)/2}} \sum_{\ket{x} \in \{ \ket{+}, \ket{-} \}^{\otimes n}} (1 + (-1)^{\# \ket{-}} \mathrm{e}^{in\Bar{\theta}}) \ket{x},
\end{align}
where $\# \ket{-}$ is used to represent the number of qubits of $\ket{x}$ in the state $\ket{-}$. Hence, the parity of the $X$ basis measurements (which is stored as $p_i$) is random, with the probability of even parity given by $\frac{1}{4} \left| 1 + \mathrm{e}^{in\Bar{\theta}} \right|^2 = \frac{1}{2}\left(1 + \cos(n\Bar{\theta})\right)$ as required. More specifically, generating $\mathbf{f}$ should be a pseudorandom process so that both filters give the same output to the external interfaces, but otherwise the outputs ($\mathbf{p}$ and $\mathbf{f}$) cannot be distinguished.

\textbf{Security:} We use a simulator $\sigma$ on the open interfaces of $\mathcal{S}$ to reconstruct the information that would be available to the open interfaces of $\pi_H \mathcal{R}$, that is, $\pi_H \mathcal{R} \approx \sigma_H \mathcal{S} \sigma_D$ (so we need to modify Fig.~\ref{fig: ideal resource}, giving Fig.~\ref{fig: simulated ideal resource}, so that it matches Fig.~\ref{fig: concrete resource}). These simulators are defined in Algs.~\ref{honest simulator mean} and~\ref{dishonest simulator mean}.

\begin{algorithm}
    \caption{Simulator for honest participant $j$, for mean estimation, $\sigma_H$}\label{honest simulator mean}
\begin{enumerate}
    \item Receive the parameter $\theta_j$ from the external interface.
    \item Return $\theta_j$ to the internal interface.
    \item Receive $\Bar{\theta}_0$ from the internal interface (subscripts on $\bar{\theta}$ indicate the round number).
    \item For $i = 1,...,N$:
    \begin{enumerate}
        \item Return $0$ to the internal interface.
        \item Receive $\Bar{\theta}_i$ from the internal interface. 
        \item If $\Bar{\theta}_i = 0$, set $p_i = 0$, else set $p_i = 1$.
    \end{enumerate}
    \item Return the bitstring $\mathbf{p} \in \{ 0, 1\}^N$ to the external interface.
\end{enumerate}
\end{algorithm}

\begin{algorithm}
    \caption{Simulator for dishonest participants, for mean estimation, $\sigma_D$}\label{dishonest simulator mean}
\begin{enumerate}
    \item Return $0$ to the internal interface.
    \item Receive $\Bar{\theta}_0$ from the internal interface.
    \item For $i = 1,...,N$:
    \begin{enumerate}
        \item Randomly generate a length-$|H|$ bitstring $\mathbf{h}$, which has parity $h$.
        \item Distribute one qubit of the $|D|$-qubit state $\frac{1}{\sqrt{2}}(\ket{0}^{\otimes |D|} + (-1)^{h} e^{i n \Bar{\theta}_0} \ket{1}^{\otimes |D|})$ to each of the external interfaces.
        \item Receive from each external interface the bit $o_{\mu, i}$.
        \item Set $p_i = (\oplus_{\mu \in D} o_\mu \oplus h)$.
        \item Return $p_i \pi$ to the internal interface (for example, by returning $p_i \pi/|D|$ at each interface).
        \item Receive $\Bar{\theta}_i$ from the internal interface.
        \item Return $\{ o_{\mu} \}$ to the external interfaces, where $\{ o_\mu \}_{\mu \in H}$ are given by the bits of $\mathbf{h}$.
    \end{enumerate}
\end{enumerate}
\end{algorithm}

    \begin{figure}
    \centering
    \includegraphics{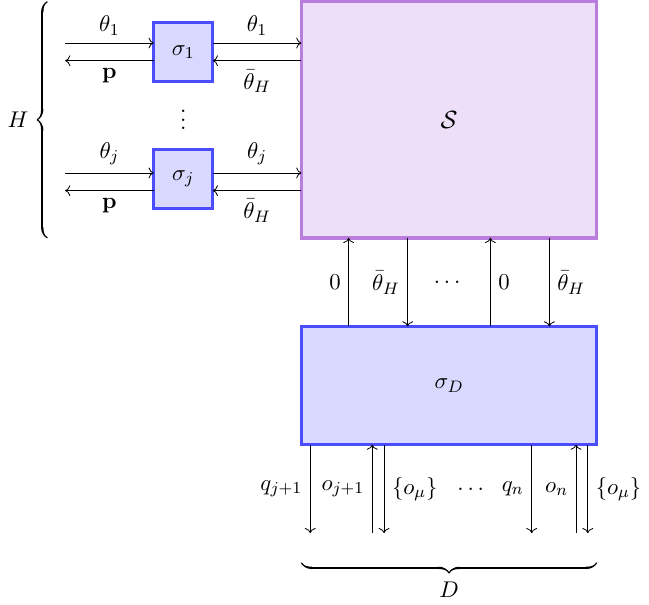}
    \caption{$\sigma_H \mathcal{S} \sigma_D$: A representation of the ideal resource $\mathcal{S}$, where the simulators $\sigma_H$ and $\sigma_D$ are applied to the corresponding interfaces. We only show $N=1$. For the security proof, this is compared to Fig.~\ref{fig: concrete resource}. We also omit the communication between simulators, which can be done through using $\mathcal{S}$, thus we only show the basic functionality of the simulators in reproducing the quantum state and measurement statistics.}
        \label{fig: simulated ideal resource}
    \end{figure}

Firstly, we consider $\pi_H \mathcal{R}$, the behaviour we wish to emulate. The external interfaces corresponding to honest parties receive an angle, $\theta_\mu$, and return the bitstring $\mathbf{p}$. The output of $\pi_H \mathcal{R}$ at the dishonest interfaces is firstly the state $\frac{1}{\sqrt{2}}(\ket{0}^{\otimes |D|} + (-1)^{h} \mathrm{e}^{i \sum_{\mu \in H} \theta_\mu} \ket{1}^{\otimes |D|})$. $h$ is the total parity of the measurement outcomes at the honest interfaces; these outcomes are perfectly random and uniformly distributed (as long as $|D|>0$). The dishonest interfaces of $\pi_H \mathcal{R}$ then receive the `measurement outcomes' $\{ o_\mu \}_{\mu \in D}$, which in fact can be any bits chosen by $D$. Finally, it returns the full set of measurement outcomes corresponding to both $H$ (generated internally by the interaction of $\pi$ with $\mathcal{R}$) and those received for $D$.

At the honest interfaces, the input to $\sigma_H \mathcal{S} \sigma_D$ is also an angle. At the dishonest interfaces, the simulator $\sigma_D$ must construct the state $\frac{1}{\sqrt{2}}(\ket{0}^{\otimes |D|} + (-1)^{h} \mathrm{e}^{i \sum_{\mu \in H} \theta_\mu} \ket{1}^{\otimes |D|})$. This is done by firstly generating a random bitstring corresponding to the `measurement outcomes' at $H$. The other required information in order to be able to construct this state is $\sum_{\mu \in H} \theta_\mu$, which is generated and sent to all parties by the first use of $\mathcal{S}$, the result of which is $\Bar{\theta}_0 = \frac{1}{n} \sum_{\mu \in H} \theta_\mu$.

The simulator $\sigma_H$ must return the bitstring $\mathbf{p}$ at the honest external interfaces, matching the total parities of the measurement outcomes at each round. This must be consistent at the external interfaces of both $\sigma_D$ and $\sigma_H$, as the distinguisher has access to all $n$ interfaces of $\sigma_H \mathcal{S} \sigma_D$. However, it is effectively decided by the bits input by $D$, as $h$ is decided in advance of distributing the state, and the measurement outcomes $\{ o_\mu \}_{\mu \in D}$ are received afterwards. If there is an inconsistency and either $p_i \neq \oplus_{\mu \in D} o_{\mu,i} \oplus h$, or if the state output is $\frac{1}{\sqrt{2}}(\ket{0}^{\otimes |D|} + (-1)^{1 \oplus h} e^{i \sum_{\mu \in H} \theta_\mu} \ket{1}^{\otimes |D|})$, the distinguisher would be able to use this to distinguish between $\pi_H \mathcal{R}$ and $\sigma_H \mathcal{S} \sigma_D$. Therefore, it is required to use the resource $\mathcal{S}$ to signal the bit $p_i$ to $\sigma_H$ for each round (this can be done in multiple ways, but an example method is used in the suggested construction in Alg.~\ref{dishonest simulator mean}). Therefore, these simulators can render the ideal resource indistinguishable from the concrete resource, $\pi_H \mathcal{R} \approx \sigma_H \mathcal{S} \sigma_D$.

It is worth noting that there is no requirement for the parities of $\mathbf{p}$ to match $\mathbf{f}$ (that is, to genuinely implement a mean calculation). In fact, this is the case if the parties of $D$ follow the honest protocol, but dishonest behaviour by the parties has the effect of inputting incorrect values for their local parameters, which harms the integrity of the protocol but not the privacy. 
\end{proof}

In this construction, we use $\mathcal{S}$ to send messages between the simulators. In total, the resource $\mathcal{S}$ is called $N + 1$ times by the simulators. If the resource $\mathcal{R}$ (and associated constructors) were defined so that all $N$ of the states are distributed, then all measurement outcomes collected before being sent across the network, then it would be possible to only use $\mathcal{S}$ twice, although this would require a more complicated scheme for encoding and sending the measurement outcomes. An alternative method is used in~\cite{colisson2024all}, in which ideal resources are equipped with a communication channel that can forward any message, which is used to mimic non-local resources -- it is straightforward to see how this could be used to replace the multiple uses of $\mathcal{S}$ in our proof. Similarly, this could be use to decompose the monolithic simulator $\sigma_D$ into local simulators for each party; the abstract cryptography formalism allows for monolithic simulators, although the universal composability framework was originally intended to only use local simulators for each dishonest user. 


The correctness aspect of this proof indicates that, using a GHZ state, we can exactly implement a scheme described by $\Diamond_H \mathcal{S} \Diamond_D$ (that is, estimating the mean using the bitstring $\mathbf{f}$), whereas the security proof indicates that this scheme is at least as secure as $\mathcal{S}$. This aligns with the definition of privacy introduced in other works~\cite{shettell2022private, bugalho2025private, hassani2025privacy}, that participants (as individuals or a collaborating group) in the scheme cannot learn any information that cannot be gained from knowledge of their own parameter(s) and the desired function alone. Using this scheme, $\Bar{\theta}$ can be estimated with a variance of $1/n^2 N$, however we note that there are $2n$ possible values (as we estimate mod $\pi$). If $\theta/2n$ is instead given as the input and the sum is returned, there is one possible outcome, with variance $4/N$.

An alternative definition of $\mathcal{S}$ could instead be used that more closely resembles $\mathcal{R}$ and therefore does not require filters for the correctness proof. The security of this relies on limited knowledge of $\Bar{\theta}$ being accessible, and does not have perfect information-theoretic security, however this can be guaranteed by a further modification of one participant hiding their measurement outcome, or encrypting it. Nonetheless, we will continue to focus on the simpler definition of $\mathcal{S}$ that more closely resembles the existing privacy definition in the next section, where we generalise this proof to other functions.

\section{General function estimation} \label{sec: general fn}

Now, instead of aiming to calculate the mean of the parameters, we will consider a more general setting, where we wish to estimate $f(\boldsymbol{\theta}) = \mathbf{a}\cdot\boldsymbol{\theta}$. We will assume that all elements of $\mathbf{a}$ are non-zero (otherwise the associated user can be removed from the protocol). We will also assume, w.l.o.g., that $\mathbf{a}$ is a unit vector. Once again we will begin by presenting the required components of our construction, in Algs.~\labelcref{ideal resource general,concrete resource general,honest protocol general}, although the filter $\Diamond$ is omitted for now. We will use the same notation, e.g. in this section, $\mathcal{S}$ should be understood to mean the ideal resource for general function estimation, and not for the estimation of the mean.

\begin{algorithm}
    \caption{Ideal resource for general function estimation, $\mathcal{S}$}\label{ideal resource general}
\begin{enumerate}
    \item Receive the parameter $\theta_\mu \in [0, 2\pi)$ from each party.
    \item Return $f(\boldsymbol{\theta}) = \mathbf{a}\cdot\boldsymbol{\theta}$ to each party.
\end{enumerate}
\end{algorithm}

\begin{algorithm}
    \caption{Concrete resource for general function estimation, $\mathcal{R}$}\label{concrete resource general}
    For $i = 1, ..., N$:
\begin{enumerate}
    \item Distribute one qubit of an $n$-qubit state $\rho$ to each of the interfaces.
    \item Receive from each interface the measurement outcome $o_{\mu, i}$.
    \item Return $\{ o_{\mu, i} \}_{\mu \in P}$ to every interface.
\end{enumerate}
\end{algorithm}

\begin{algorithm}
\caption{Honest protocol for general function estimation, $\mathcal{\pi}$ (for user $\mu = j$)}\label{honest protocol general}
\begin{enumerate}
    \item For $i = 1, ..., N$:
\begin{enumerate}
    \item Receive the parameter $\theta_{j} \in [0, 2\pi)$ from the external interface.
    \item Receive a qubit from the internal interface.
    \item Implement the quantum channel $\Lambda_j(\theta_j)$ on the qubit.
    \item Measure the qubit in the computational basis, giving measurement result $o_{j}$.
    \item Return $o_{j}$ to the internal interface.
    \item Receive $\{ o_{\mu, i} \}_{\mu \in P}$ from the internal interface.
\end{enumerate}
\item Return the bitstring $\mathbf{p} \in \{ 0, 1\}^N$ to the external interface, where $p_i = g(\{o_{\mu, i}\}_{\mu})$.
\end{enumerate}
\end{algorithm}

The concrete resource $\mathcal{R}$ and honest protocol $\pi$ define the dynamics of the protocol through the initial state $\rho$ (which we assume is an $n$-qubit state, but which can be generalised to consider a larger multimode state) and the quantum channel $\Lambda_\mu(\theta_\mu)$ (we assume w.l.o.g. that the local measurements are carried out in the computational basis, although again this can be generalised to other measurements, as long as they are independent of $\{ \theta_\mu \}$, and we also assume w.l.o.g. that $\Lambda(0) = \mathds{1}$). In general these channels can be specific to each user (hence $\Lambda$ is labelled by $\mu$), but they should be known to everyone in the network. We also do not specify the function $g$ which is used to combine measurement outcomes (which was previously addition modulo 2). However, the function $g(\{ o_\mu \})$ should require the measurement bit from every single user -- otherwise, the parameters of some users would be irrelevant to the calculation. These dynamics can be chosen according to either~\cite{bugalho2025private} or~\cite{hassani2025privacy}, with the former considering an appropriate set of states for use in this protocol, and we will consider the security proofs of these schemes separately. 

Let us consider the requirements on these dynamics that must be fulfilled for the constructive security proof, $\pi_H \mathcal{R} \approx_{\varepsilon} \sigma_H \mathcal{S} \sigma_D$. Once again, the output of $\pi_H \mathcal{R}$ at the honest interfaces is generally straightforward to reproduce, and therefore $\sigma_H$ can be constructed in a similar way to the GHZ case. Hence let us consider the simulator $\sigma_D$.

As in Section~\ref{sec: mean estimation}, the simulator needs to produce both the bitstring over the honest parties, $\mathbf{h}$, and the partial state over the dishonest parties (up to normalisation):
\begin{equation} \label{eq: dishonest output state}
\rho_{\text{out}, D} := \Tr_H\left[\bigotimes_{\mu \in H} \Pi_{o_\mu}\Lambda_\mu(\theta_\mu) \rho \left(\bigotimes_{\nu \in H} \Pi_{o_\nu} \Lambda_\nu(\theta_\nu) \right)^\dagger\right],
\end{equation}
in which $\Pi_{o_\mu}$ is the projector associated with measurement outcome $o_\mu$. Once again we will use the ideal resource $\mathcal{S}$ to communicate the function $f(\theta_H)$ (that is, the function of $\theta$ where every $\{\theta_\mu\}_{\mu \in D}$ is set to 0) to $\sigma_D$, and therefore we can assume that $\sigma_D$ has access to this information about the parameters of honest parties, but no more.

Given that we do not specify which parties are honest or dishonest, let us first consider how to construct the global state $\rho(\boldsymbol{\theta}) := \bigotimes_{\mu \in P} \Lambda_\mu(\theta_\mu) \rho \left(\bigotimes_{\nu \in P} \Lambda_\nu(\theta_\nu) \right)^\dagger$. This is one of a set of states, parametrised by $\boldsymbol{\theta}$, which we wish to construct only using knowledge of $f(\boldsymbol{\theta})$. Thus we can always construct a state $\rho(\boldsymbol{\theta}')$, where $f(\boldsymbol{\theta}') = f(\boldsymbol{\theta})$, but individual parameters may otherwise differ -- that is, $\boldsymbol{\theta}' - \boldsymbol{\theta} \propto \mathbf{b}$, where $\mathbf{a} \cdot \mathbf{b} = 0$ and $\mathbf{b}$ is a unit vector. 

For a single repetition of the protocol ($N = 1$), we will show that the distinguishing advantage is upper bounded by $\max_{\boldsymbol{\theta}, \boldsymbol{\theta}'}(T(\rho(\boldsymbol{\theta}), \rho(\boldsymbol{\theta}')))$ where $T(\rho, \rho')$ is the trace distance, $\frac{1}{2} \norm{\rho - \rho'}_1$. Over multiple repetitions, this becomes $\max_{\boldsymbol{\theta}, \boldsymbol{\theta}'}(T(\rho(\boldsymbol{\theta})^{\otimes N}, \rho(\boldsymbol{\theta}')^{\otimes N}))$~\cite{helstrom1969quantum}, which we will consider in more detail later on.

For now let us restrict ourselves to the case $N=1$, for which we have the following result, which we can use to evaluate the security of certain dynamics:
\begin{lemma} \label{lemma: security}
    Consider a parameter estimation scheme using the definitions of $\mathcal{R}$, $\mathcal{S}$, and $\pi$ from Algs.~\labelcref{concrete resource general,ideal resource general,honest protocol general} -- that is, using encoding dynamics $\Lambda_\mu(\theta_\mu)$ on state $\rho$ to produce state $\rho(\boldsymbol{\theta})$, used to estimate $f(\boldsymbol{\theta}) = \mathbf{a} \cdot \boldsymbol{\theta}$. In the case that $N = 1$, it is possible to define simulators with security $\varepsilon$, as defined in Def.~\ref{defn: constructive cryptography}:
    \begin{equation} \label{eq: lemma}
       \max_{\boldsymbol{\theta}, \boldsymbol{\theta}'}\left(T(\rho(\boldsymbol{\theta}), \rho(\boldsymbol{\theta}'))\right) \leq \varepsilon  \implies \exists \; \sigma_D, \sigma_H \;\text{s.t.} \; \pi_H \mathcal{R} \approx_\varepsilon \sigma_H \mathcal{S} \sigma_D,
    \end{equation}
    where $\boldsymbol{\theta}, \boldsymbol{\theta}' \in \{ 2 \pi \}^n$ are any $n$-element vectors of angles such that $\mathbf{a} \cdot (\boldsymbol{\theta}' - \boldsymbol{\theta}) = 0$.
\end{lemma}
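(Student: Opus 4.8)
The plan is to generalise the simulator construction from the GHZ mean-estimation proof (Thm.~\ref{thm: mean estimation}) and to close the argument with a single application of the data-processing inequality together with the operational (Helstrom) meaning of the trace distance. First I would fix a partition $P = H \sqcup D$ and spell out the view a distinguisher obtains from $\pi_H\mathcal{R}$: at the dishonest interfaces it receives the reduced state $\rho_{\text{out}, D}$ of Eq.~\eqref{eq: dishonest output state} (equivalently, it holds the dishonest marginal of $\rho(\boldsymbol{\theta})$ after the honest computational-basis measurements, which, since $\Lambda(0) = \mathds{1}$, coincides with the marginal of $\rho(\theta_H, \mathbf{0}_D)$); it then submits arbitrary bits $\{o_\mu\}_{\mu\in D}$, and finally receives the announced honest outcomes $\{o_\mu\}_{\mu\in H}$; at the honest interfaces it receives the output bit $p = g(\{o_\mu\}_{\mu\in P})$.

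Next I would define the simulators. As in Alg.~\ref{honest simulator mean}, $\sigma_H$ merely forwards $\theta_j$ to $\mathcal{S}$ and later outputs the bit signalled to it. The work is done by $\sigma_D$: it inputs $0$ at every dishonest interface of $\mathcal{S}$, so that $\mathcal{S}$ returns $f(\theta_H) = \sum_{\mu\in H} a_\mu \theta_\mu$; from this single number it fixes a canonical $\boldsymbol{\theta}'$ with $\theta'_\nu = 0$ for $\nu\in D$ and $\sum_{\mu\in H} a_\mu\theta'_\mu = f(\theta_H)$, prepares $\rho(\boldsymbol{\theta}')$, measures the honest qubits in the computational basis to obtain simulated outcomes $\mathbf{h} = \{o_\mu\}_{\mu\in H}$, and hands the resulting dishonest marginal to the external interfaces. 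After receiving the dishonest bits it computes $p = g(\mathbf{h}, \{o_\mu\}_{\mu\in D})$, announces $\mathbf{h}$ at the dishonest interfaces, and uses $\mathcal{S}$ to signal $p$ to every $\sigma_H$, exactly as in Alg.~\ref{dishonest simulator mean}. By construction $\boldsymbol{\theta} := (\theta_H, \mathbf{0}_D)$ and $\boldsymbol{\theta}'$ satisfy $\mathbf{a}\cdot(\boldsymbol{\theta}' - \boldsymbol{\theta}) = 0$.

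The core of the argument is then a reduction. Let $\Phi$ be the fixed, parameter-independent CPTP map that measures the honest qubits in the computational basis, writes the outcomes into a classical register, and keeps the dishonest qubits; write $\omega(\boldsymbol{\theta}) = \Phi(\rho(\boldsymbol{\theta}))$. Every object the distinguisher can access in either world --- the dishonest marginal, the announced honest outcomes, and the bit $p$ (a fixed function $g$ of the recorded honest outcomes and the distinguisher's own dishonest bits) --- is produced by applying one and the same interactive post-processing to $\omega(\boldsymbol{\theta})$ in the real world and to $\omega(\boldsymbol{\theta}')$ in the simulated world; the inter-simulator signalling through $\mathcal{S}$ is precisely what keeps the classical registers consistent across the two sets of interfaces. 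Hence the whole execution factors as a single process applied to $\omega(\boldsymbol{\theta})$ versus $\omega(\boldsymbol{\theta}')$, so for every distinguisher the advantage is at most $T(\omega(\boldsymbol{\theta}), \omega(\boldsymbol{\theta}'))$, which by data processing satisfies $T(\omega(\boldsymbol{\theta}), \omega(\boldsymbol{\theta}')) \leq T(\rho(\boldsymbol{\theta}), \rho(\boldsymbol{\theta}')) \leq \max_{\boldsymbol{\theta}, \boldsymbol{\theta}'} T(\rho(\boldsymbol{\theta}), \rho(\boldsymbol{\theta}')) \leq \varepsilon$, giving $\pi_H\mathcal{R} \approx_\varepsilon \sigma_H\mathcal{S}\sigma_D$.

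The step I expect to be the main obstacle is the reduction itself: justifying rigorously that nothing the distinguisher sees depends on $\boldsymbol{\theta}$ except through $\omega(\boldsymbol{\theta})$, despite the adversary choosing its announced bits adaptively from the quantum state it holds and despite $p$ being computed from those adversarial bits. The two delicate points are (i) that the simulated honest outcomes must be drawn jointly with the dishonest marginal exactly as the true measurement would produce them, which is what allows the single map $\Phi$ to absorb both, and (ii) that $\sigma_D$ can reconstruct $p$ and signal it so that the honest and dishonest interfaces stay consistent, possible only because $\sigma_D$ holds both $\mathbf{h}$ and $\{o_\mu\}_{\mu\in D}$. Once the view is written as $\Phi(\rho(\boldsymbol{\theta}))$ the bound is immediate; the extension to $N>1$ is deferred, consistent with the statement restricting to $N=1$.
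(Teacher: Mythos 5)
Your proposal is correct and follows essentially the same route as the paper: the same simulator construction (with $\sigma_D$ querying $\mathcal{S}$ with zeros to obtain $f(\theta_H)$, choosing a consistent $\boldsymbol{\theta}'$, preparing $\rho(\boldsymbol{\theta}')$ and measuring the honest registers), and the same final bound via the Holevo--Helstrom theorem on $T(\rho(\boldsymbol{\theta}),\rho(\boldsymbol{\theta}'))$. Your explicit factoring of the distinguisher's view through the fixed map $\Phi$ and the data-processing inequality merely makes rigorous a reduction the paper states more tersely.
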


\begin{proof}
    The simulator $\sigma_H$ is defined as in Alg.~\ref{honest simulator mean}, but with the modification that every instance of $\Bar{\theta}$ should instead be replaced by $f(\boldsymbol{\theta})$ (and with $N = 1$). The simulator $\sigma_D$ is a modified version of Alg.~\ref{dishonest simulator mean}, described in Alg.~\ref{dishonest simulator general}.

    \begin{algorithm}
    \caption{Simulator for dishonest participants, for general function estimation, $\sigma_D$}\label{dishonest simulator general}
\begin{enumerate}
    \item Return $0$ to the internal interface.
    \item Receive $f_0(\boldsymbol{\theta})$ from the internal interface.
        \item Using $f_0(\boldsymbol{\theta})$, select angles $\theta_H = \{ \theta_\mu \}_{\mu \in H}$ such that $f(\boldsymbol{\theta}_H) = f_0(\boldsymbol{\theta})$.
        \item Construct the state $\boldsymbol{\Lambda}_H (\boldsymbol{\theta}_H) \rho \boldsymbol{\Lambda}_H^\dagger (\boldsymbol{\theta}_H)$ (where $\boldsymbol{\Lambda}_H (\boldsymbol{\theta}_H)= \bigotimes_{\mu \in H} \Lambda_\mu(\theta_\mu)$). 
        \item Measure the parties associated with $H$ in the computational basis, giving measurement outcomes $\{ o_\mu\}_{\mu \in H}$.
        \item Distribute one qubit of the resulting $|D|$-qubit state to each of the external interfaces.
        \item Receive from each external interface the bit $o_{\mu}$.
        \item Set $p = g(\{o_{\mu}\})$.
        \item Return $p \pi$ to the internal interface (for example, by returning $p \pi/|D|$ at each interface).
        \item Receive $f(\boldsymbol{\theta})$ from the internal interface.
        \item Return $\{ o_{\mu} \}$ to the external interfaces.
\end{enumerate}
\end{algorithm}

As is the case in the proof of Theorem~\ref{thm: mean estimation}, the external honest interfaces of $\pi_H \mathcal{R}$ and $\sigma_H \mathcal{S} \sigma_D$ receive an angle, and output the bit $\mathbf{p}$. As this depends on the outcomes generated and received by $\sigma_D$, we once again use $\mathcal{S}$ multiple times to communicate between the simulators.

We must compare the states output to the dishonest interfaces of $\pi_H \mathcal{R}$ and $\sigma_H \mathcal{S} \sigma_D$, as well as the measurement outcomes $\{ o_\mu \}_{\mu \in H}$, which differ only by the choice of parameters of $H$ (where we will label the choice input by the distinguisher to $\pi_H$ as $\boldsymbol{\theta}_H$, and the parameters chosen by $\sigma_D$ as $\boldsymbol{\theta}'_H$). Let $\rho(\boldsymbol{\theta}_H)$ be the encoded state $\boldsymbol{\Lambda}_H (\boldsymbol{\theta}_H) \rho \boldsymbol{\Lambda}_H^\dagger (\boldsymbol{\theta}_H)$. Then, by the Holevo-Helstrom theorem~\cite{helstrom1969quantum, holevo1973statistical}, the distinguishing advantage is upper bounded by the maximum of $T(\rho (\boldsymbol{\theta}_H), \rho (\boldsymbol{\theta}'_H))$. However:
\begin{equation}
    \max\left(T(\rho (\boldsymbol{\theta}_H),\rho (\boldsymbol{\theta}'_H))\right) \leq \max_{\boldsymbol{\theta}, \boldsymbol{\theta}'}\left(T(\rho(\boldsymbol{\theta})), \rho(\boldsymbol{\theta}')\right) \leq \varepsilon,
\end{equation}
as $\rho (\boldsymbol{\theta}_H)$ and $\rho (\boldsymbol{\theta}'_H)$ are in fact special cases of the states $\rho(\boldsymbol{\theta})$ and $\rho(\boldsymbol{\theta}')$ which we consider, where every $\theta_{\mu}$ for $\mu \in D$ is set to 0. Hence, the maximum distinguishing advantage is $\varepsilon$.

\end{proof}

\begin{remark}
Although we have presented this within the context of networked parameter estimation, Lemma~\ref{lemma: security} can be applied more generally. Consider a scheme where we use the state $\rho(\boldsymbol{\theta})$, which encodes information $\boldsymbol{\theta} \in \Phi$, but where $\Phi$ is the condition which is intended to be revealed by the scheme. For example, in the parameter estimation protocol, $\Phi$ is the set of parameter assignments with the same value for $\boldsymbol{\theta} \cdot \mathbf{a}$. If Eq.~\ref{eq: lemma} holds for any $\boldsymbol{\theta}' \in \Phi$, then we can use our knowledge of $\Phi$ to construct $\rho(\boldsymbol{\theta}')$, and hence the proof follows along similar lines to the proof of Lemma~\ref{lemma: security} -- that is, the information revealed is $\Phi$, and not $\boldsymbol{\theta}$.

For example, consider the definition of anonymity given in~\cite{grasselli2022secure}, which was applied to parameter estimation in~\cite{dejong2025anonymous}. Then we require that states should be indistinguishable under any change of identities across the network. Hence, we only need to show that Eq.~\ref{eq: lemma} is fulfilled for any assignment of user identities, and then we can construct the resources and converters in a similar way, without revealing information about user identities.
\end{remark}

In order to consider the distinguishing probability for implementations of the protocol with $N > 1$, we can still use Lemma~\ref{lemma: security}, but this time bearing in mind that this is across $N$ copies of $\rho(\boldsymbol{\theta})$, and that the distinguisher has the ability to make entangling measurements. We find that:
\begin{equation}
\begin{split}
    \max_{\boldsymbol{\theta}, \boldsymbol{\theta}'}\left(T\left(\rho(\boldsymbol{\theta}), \rho(\boldsymbol{\theta}')\right)\right) &\leq \varepsilon  \\ \implies \max_{\boldsymbol{\theta}, \boldsymbol{\theta}'}\left(T\left(\rho(\boldsymbol{\theta})^{\otimes N}, \rho(\boldsymbol{\theta}')^{\otimes N}\right)\right) &\leq \sqrt{1 - F\left(\rho(\boldsymbol{\theta})^{\otimes N}, \rho(\boldsymbol{\theta}')^{\otimes N}\right)}\\ 
    &= \sqrt{1 - F(\rho(\boldsymbol{\theta}), \rho(\boldsymbol{\theta}'))^N} \\
    & \leq \sqrt{1 - (1 - \varepsilon^2)^N}. 
    \end{split}
\end{equation}

At this stage, we must consider the dynamics of the parameter estimation protocol in more detail in order to be able to judge their security. We do this using the formalisms of two prior works: in Sections~\ref{sec: luis defn} and~\ref{sec: majid defn}, we consider the definition of privacy in~\cite{bugalho2025private} and~\cite{hassani2025privacy} respectively. These both use the quantum Fisher information, which is introduced in Appendix Section~\ref{sec: QFI appendix}. Further work \cite{bugalho2025private} focuses on states which can be used as the basic resource of the scheme while still guaranteeing privacy, whereas~\cite{hassani2025privacy} focuses on the encoding dynamics and the effect of error.

\subsection{Quasi-privacy definition: Bugalho et al.} \label{sec: luis defn}

We will now consider the parameter estimation schemes described in~\cite{bugalho2025private}. This work introduces a metric for the privacy of certain implementations, given in terms of the quantum Fisher information matrix (QFIm)~\cite{sidhu2020geometric, liu2020quantum}, $\mathcal{Q}$, which is defined for a particular state (i.e. $\rho(\boldsymbol{\theta})$), given with respect to a set of parameters (i.e. $\{ \theta_\mu \}$), and is a measure of the available information about a particular parameter that can be extracted from a state by the appropriate measurements. 

They introduce the following privacy measure:
\begin{definition}[Bugalho et al., \cite{bugalho2025private}]
    The privacy measure of a multi-parameter estimation problem, which results in a quantum Fisher information matrix $\mathcal{Q}$, with respect to a target linear function $f(\boldsymbol{\theta}) = \mathbf{a} \cdot \boldsymbol{\theta}$ with $\norm{\mathbf{a}} = 1$ is given by:
    \begin{equation}
        \mathcal{P} (\mathcal{Q}, \mathbf{a}) = \frac{\mathbf{a}^\text{T} \mathcal{Q} \mathbf{a}}{\Tr(\mathcal{Q})} \equiv \frac{\Tr( \mathcal{Q} \mathbf{a} \mathbf{a}^\text{T})}{\Tr(\mathcal{Q})} = \frac{\Tr(\mathcal{Q} W_{\textbf{a}})}{\Tr(\mathcal{Q})}.
    \end{equation}
\end{definition}

In particular, $\mathcal{P} = 1$ means complete privacy -- that is, a state that does not provide dishonest participants with any information that is not calculable from $f(\boldsymbol{\theta})$ and their own parameters. $\mathcal{P}$ varies between 0 and 1.


This definition is composably secure, which we show in a similar way to our previous construction, although note that we do not explicitly define the filter $\Diamond$. However, the distinguishing advantage $\varepsilon$ scales as $\sqrt{1 - \mathcal{P}^2}$.

\begin{theorem} \label{thm: luis defn composably private}
Using the definitions of $\mathcal{S}, \mathcal{R}$ and $\pi$ from Algs.~\labelcref{ideal resource general,concrete resource general,honest protocol general} for $N=1$, there exists some $\Diamond$ such that $\pi_H$ constructs $\mathcal{S}_{\Diamond}$ from $\mathcal{R}_{\pi}$ to within $\sqrt{1 - \mathcal{P}^2(\mathcal{Q}, \mathbf{a})}$, where $\mathcal{Q}$ is the QFIm of the encoded state $\rho(\boldsymbol{\theta})$ with respect to $\boldsymbol{\theta}$.    
\end{theorem}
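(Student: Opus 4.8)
The plan is to combine Lemma~\ref{lemma: security} with a bound relating the trace distance of encoded states to the privacy measure $\mathcal{P}(\mathcal{Q}, \mathbf{a})$. By Lemma~\ref{lemma: security}, it suffices to bound $\max_{\boldsymbol{\theta}, \boldsymbol{\theta}'} T(\rho(\boldsymbol{\theta}), \rho(\boldsymbol{\theta}'))$ over all pairs with $\mathbf{a} \cdot (\boldsymbol{\theta}' - \boldsymbol{\theta}) = 0$; whatever $\varepsilon$ bounds this trace distance immediately gives $\pi_H \mathcal{R} \approx_\varepsilon \sigma_H \mathcal{S} \sigma_D$, and the existence of a suitable filter $\Diamond$ for the correctness half follows as in the mean-estimation case. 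So the entire content reduces to the geometric claim that the maximal trace distance between two encoded states differing only along a direction orthogonal to $\mathbf{a}$ is at most $\sqrt{1 - \mathcal{P}^2(\mathcal{Q}, \mathbf{a})}$.

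First I would connect trace distance to quantum Fisher information via the local metric structure. For nearby states $\rho(\boldsymbol{\theta})$ and $\rho(\boldsymbol{\theta} + \mathrm{d}\boldsymbol{\theta})$, the infinitesimal Bures distance is $\mathrm{d}s^2_{\mathrm{Bures}} = \tfrac{1}{4}\,\mathrm{d}\boldsymbol{\theta}^{\mathrm{T}} \mathcal{Q}\, \mathrm{d}\boldsymbol{\theta}$, and since trace distance is bounded above by Bures/Helstrom distance, the QFIm $\mathcal{Q}$ governs how distinguishable the family $\{\rho(\boldsymbol{\theta})\}$ is along each direction. The key observation is that moving along a direction $\mathbf{b}$ with $\mathbf{a}\cdot\mathbf{b}=0$ changes the state by an amount controlled by $\mathbf{b}^{\mathrm{T}} \mathcal{Q}\, \mathbf{b}$, whereas moving along $\mathbf{a}$ is controlled by $\mathbf{a}^{\mathrm{T}}\mathcal{Q}\,\mathbf{a}$. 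The privacy measure $\mathcal{P} = \mathbf{a}^{\mathrm{T}}\mathcal{Q}\,\mathbf{a}/\Tr(\mathcal{Q})$ being close to $1$ forces almost all of the Fisher information to lie in the $\mathbf{a}$ direction, so the orthogonal directions — exactly those that distinguish states with the same $f(\boldsymbol{\theta})$ — carry Fisher information at most $\Tr(\mathcal{Q}) - \mathbf{a}^{\mathrm{T}}\mathcal{Q}\,\mathbf{a} = \Tr(\mathcal{Q})(1 - \mathcal{P})$. I would make this quantitative by decomposing $\boldsymbol{\theta}' - \boldsymbol{\theta}$ into its component in the orthogonal complement of $\mathbf{a}$ and bounding the accumulated distinguishability by integrating (or directly bounding) the metric along that component, arriving at a factor of the form $\sqrt{1 - \mathcal{P}^2}$.

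The main obstacle will be passing from the infinitesimal/local QFI bound to a global statement about trace distance between states $\rho(\boldsymbol{\theta})$ and $\rho(\boldsymbol{\theta}')$ that may be separated by a large angle, since $\boldsymbol{\theta}$ and $\boldsymbol{\theta}'$ range over all of $[0,2\pi)^n$ rather than being infinitesimally close. The clean square-root form $\sqrt{1 - \mathcal{P}^2}$ strongly suggests the intended route is not integration of the metric but a direct fidelity bound: one expects $\max_{\boldsymbol{\theta},\boldsymbol{\theta}'} F(\rho(\boldsymbol{\theta}), \rho(\boldsymbol{\theta}')) \geq \mathcal{P}^2$, so that $T \leq \sqrt{1 - F} \leq \sqrt{1 - \mathcal{P}^2}$ by the Fuchs–van de Graaf inequality. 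I would therefore try to exhibit $\mathcal{P}$ as (a lower bound on) an overlap between the orthogonal-direction marginals, plausibly using the structure that the encoding channels act independently on each mode so the fidelity factorizes over parties and the privacy measure captures precisely the residual overlap after projecting out the $\mathbf{a}$ direction.

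The remaining steps are then routine: invoke Fuchs–van de Graaf to convert the fidelity bound into the trace-distance bound $\sqrt{1-\mathcal{P}^2}$, feed this as $\varepsilon$ into Lemma~\ref{lemma: security} to obtain the security condition $\pi_H \mathcal{R} \approx_{\sqrt{1 - \mathcal{P}^2}} \sigma_H \mathcal{S} \sigma_D$, and construct the filter $\Diamond$ for the correctness half exactly as in Theorem~\ref{thm: mean estimation} (a pseudorandom process reproducing the measurement statistics of $\rho(\boldsymbol{\theta})$ under $g$), which does not affect the value of $\varepsilon$. Together these establish that $\pi_H$ constructs $\mathcal{S}_{\Diamond}$ from $\mathcal{R}_{\pi}$ to within $\sqrt{1 - \mathcal{P}^2(\mathcal{Q}, \mathbf{a})}$.
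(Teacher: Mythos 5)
Your skeleton matches the paper's: reduce to bounding $\max_{\boldsymbol{\theta},\boldsymbol{\theta}'} T(\rho(\boldsymbol{\theta}),\rho(\boldsymbol{\theta}'))$ via Lemma~\ref{lemma: security}, relate distinguishability along directions $\mathbf{b}$ orthogonal to $\mathbf{a}$ to $\mathbf{b}^{\mathrm{T}}\mathcal{Q}\,\mathbf{b} \leq \Tr(\mathcal{Q}) - \mathbf{a}^{\mathrm{T}}\mathcal{Q}\,\mathbf{a}$, and finish with $T \leq \sqrt{1-F}$ and $F \geq \mathcal{P}^2$. You also correctly identify the crux --- passing from the infinitesimal Bures/QFI statement to a bound on the trace distance between finitely separated $\rho(\boldsymbol{\theta})$ and $\rho(\boldsymbol{\theta}')$ --- but you do not close it, and the route you gesture at is not the paper's. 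The paper does in fact integrate the metric: for local unitary encodings (or CPTP maps applied cyclically over the parameter space) the QFIm is constant under reparametrisation of $\boldsymbol{\theta}$, so $\tilde{\mathcal{Q}}_{22} = \mathbf{b}^{\mathrm{T}}\mathcal{Q}\,\mathbf{b}$ is constant along the geodesic in the $\mathbf{b}$ direction and the accumulated Bures distance is proportional to $\sqrt{\tilde{\mathcal{Q}}_{22}}$; the proportionality constant is fixed by normalising against the extremal case $\tilde{\mathcal{Q}}_{22} = \Tr(\mathcal{Q})$, where the Bures distance must saturate its maximum $\sqrt{2}$, giving $D^2_B \leq 2\tilde{\mathcal{Q}}_{22}/\Tr(\mathcal{Q}) \leq 2(1-\mathcal{P})$ and hence, via $F = (1-\tfrac{1}{2}D^2_B)^2 \geq \mathcal{P}^2$, the claimed $\sqrt{1-\mathcal{P}^2}$.

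The alternative you propose to close the gap --- that the fidelity factorises over parties because the encoding channels act locally --- fails for precisely the states of interest: the probe $\rho$ is entangled across the network, so $F(\rho(\boldsymbol{\theta}),\rho(\boldsymbol{\theta}'))$ does not factorise into single-party overlaps, and $\mathcal{P}(\mathcal{Q},\mathbf{a})$ is a global property of the QFIm rather than a product of local quantities. The missing ingredients are therefore the constancy of the QFIm under the encoding (which makes the geodesic integration trivial) and the $\Tr(\mathcal{Q})$ normalisation of the resulting Bures length; without them your argument only controls the distinguishing advantage for infinitesimally separated $\boldsymbol{\theta},\boldsymbol{\theta}'$, which is insufficient since the distinguisher may choose $\boldsymbol{\theta}'-\boldsymbol{\theta}$ to be any finite vector orthogonal to $\mathbf{a}$. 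The correctness half of your proposal (reusing the construction of Thm.~\ref{thm: mean estimation} with the same $\varepsilon$) does match the paper.
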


\begin{proof}
    \textbf{Security:} We first aim to show that $\pi_H \mathcal{R} \approx_{\varepsilon} \sigma_H \mathcal{S} \sigma_D$ for some $\varepsilon$. Using Lemma~\ref{lemma: security}, this is achieved by showing that $\max_{\boldsymbol{\theta}, \boldsymbol{\theta}'}\left(T(\rho(\boldsymbol{\theta}), \rho(\boldsymbol{\theta}'))\right) \leq \varepsilon$, where we will find that $\varepsilon = \sqrt{1 - \mathcal{P}^2(\mathcal{Q}, \mathbf{a})}$. Recall that $\boldsymbol{\theta}, \boldsymbol{\theta}'$ are any $n$-element vectors of angles such that $\mathbf{a} \cdot (\boldsymbol{\theta}' - \boldsymbol{\theta}) = 0$, and we will say that $\boldsymbol{\theta}' - \boldsymbol{\theta} \propto \mathbf{b}$.

    We will consider the Bures distance (see Appendix Section~\ref{sec: QFI appendix}), $D_B (\rho, \sigma)$, a measure of the distance between two states which ranges between 0 and $\sqrt{2}$. For two states $\rho$ and $\sigma$~\cite{paris2009quantum}:
    \begin{equation} \label{eq: fidelity bures}
        F(\rho, \sigma) = \left( 1 - \frac{1}{2} D^2_B (\rho, \sigma)\right)^2,
    \end{equation}
    where $F(\rho, \sigma)$ is the fidelity. For a state $\rho(\boldsymbol{\theta})$, the Bures metric -- the Bures distance between states separated by an infinitesimally small change in parameters -- can be defined in relation to the QFIm with regards to the parameters $\boldsymbol{\theta}$ using:
    \begin{equation}
        D^2_B (\rho(\boldsymbol{\theta}), \rho(\boldsymbol{\theta} + d\boldsymbol{\theta})) = \frac{1}{4} \mathcal{Q}_{\mu \nu} (\rho(\boldsymbol{\theta})) d\theta_\mu d\theta_\nu.
    \end{equation}
    Thus, we have a relation between the QFIm and the trace distance (which can be bounded using the fidelity).

    We can always construct an orthogonal matrix $A$, which has the vector $\mathbf{a}$ as the first column, $\mathbf{b}$ as the second column, and where the further columns are given by the other orthonormal vectors that form a basis. This permits the change of basis of the QFIm:
    \begin{equation}
        \Tilde{\mathcal{Q}} = A^\text{T} \mathcal{Q} A,
    \end{equation}
    where $\Tilde{\mathcal{Q}}$ is now expressed in terms of a different parameter set $\Tilde{\boldsymbol{\theta}}$, where $\Tilde{\theta}_1 = \mathbf{a}\cdot \boldsymbol{\theta}$ and $\Tilde{\theta}_2 = \mathbf{b}\cdot \boldsymbol{\theta}$, etc. Therefore, we can see that, by construction, $\boldsymbol{\theta}' - \boldsymbol{\theta} \propto \Tilde{\theta}_2$. Hence:
    \begin{equation} \label{eq: Bures metric}
        D^2_B (\rho(\boldsymbol{\theta}), \rho(\boldsymbol{\theta} + d\Tilde{\theta}_2)) = \frac{1}{4} \Tilde{\mathcal{Q}}_{2 2} d^2\Tilde{\theta}_2.
    \end{equation}

    In order to use Eq.~\ref{eq: Bures metric} to find the Bures distance, the Bures metric must be integrated along the geodesic between $\rho(\boldsymbol{\theta})$ and $\rho(\boldsymbol{\theta}')$. Assuming that parameters are encoded by local unitaries (or indeed, by local CPTP maps which can be applied cyclically over the parameter space), the QFI is constant, regardless of the parametrisation of $\rho(\boldsymbol{\theta})$~\cite{liu2020quantum}. Hence, $\tilde{Q}_{22}$ is also constant over the parameter space. Thus, integrating over a parameter change in the direction of $\mathbf{b}$, the Bures distance must be proportional to $\sqrt{\Tilde{\mathcal{Q}}_{2 2}}$.

    Note that the choice of parametrisation is not unique, and therefore this relationship must be appropriately normalised so that the Bures distance is between 0 and $\sqrt{2}$. The highest possible value of $\Tilde{\mathcal{Q}}_{22}$ is $\Tr(\mathcal{Q})$, which is constant under any reparametrisation, and represents the case where the information available aligns only with the function $\mathbf{b} \cdot \boldsymbol{\theta}$. Hence:
    \begin{equation}
        D^2_B (\rho(\boldsymbol{\theta}), \rho(\boldsymbol{\theta} + d\Tilde{\theta}_2)) = \frac{2 \Tilde{\mathcal{Q}}_{2 2}}{\Tr(\mathcal{Q})}.
    \end{equation}

    Furthermore, it is straightforward to show that given that $\mathcal{Q}$ is real, symmetric, and positive semidefinite matrix, $\mathbf{b}^{\text{T}} \mathcal{Q} \mathbf{b} = \Tilde{\mathcal{Q}}_{22} \leq \Tr(\mathcal{Q}) - \mathbf{a}^{\text{T}} \mathcal{Q} \mathbf{a}$, and thus:
    \begin{equation}
        D^2_B (\rho(\boldsymbol{\theta}), \rho(\boldsymbol{\theta} + d\Tilde{\theta}_2)) \leq 2 - \frac{2 \mathbf{a}^{\text{T}} \mathcal{Q} \mathbf{a}}{\Tr(\mathcal{Q})} = 2 \left( 1 - \mathcal{P}(\mathcal{Q}, \mathbf{a})\right).
    \end{equation}

    By using $T(\rho, \sigma) \leq \sqrt{1 - F(\rho, \sigma)}$ and Eq.~\ref{eq: fidelity bures}, we have:
        \begin{equation} \label{eq: equiv of qfi and trace distance}
        \max_{\boldsymbol{\theta}, \boldsymbol{\theta}'}\left(T(\rho(\boldsymbol{\theta}), \rho(\boldsymbol{\theta}'))\right) \leq \sqrt{1 - \mathcal{P}^2(\mathcal{Q}, \mathbf{a})}.
    \end{equation} 
    Therefore, using Lemma~\ref{lemma: security}, $\pi_H \mathcal{R} \approx_{\sqrt{1 - \mathcal{P}^2}} \sigma_H \mathcal{S} \sigma_D$.
    
    \textbf{Correctness:} The aim is to construct an appropriate filter $\Diamond$ such that $\pi_H \mathcal{R} \pi_D \approx_{\varepsilon} \Diamond_H \mathcal{S} \Diamond_D$. This is possible if, after receiving $f(\boldsymbol{\theta})$ from the internal (resource) interface, the filter can produce the output $\mathbf{f}$ (or the single bit $f$ for $N=1$) which has the same distribution as the bit $p$ which is output at the external interfaces of $\Diamond_H \mathcal{S} \Diamond_D$. A simple way that this would be possible would be, for example, to construct the state $\rho(\boldsymbol{\theta})$ and then perform the appropriate measurements. As we have seen by Eq.~\ref{eq: equiv of qfi and trace distance}, it is possible, using $f(\boldsymbol{\theta})$, to construct a state which is within trace distance $\sqrt{1 - \mathcal{P}^2(\mathcal{Q}, \mathbf{a})}$ of $\rho(\boldsymbol{\theta})$, and hence this is the maximum distinguishing advantage possible when comparing the results $f$ and $p$. (Once again $\mathcal{S}$ could be used to communicate between filters and ensure that the output is the same for both, however with well-defined dynamics it would not be necessary to construct the state $\rho(\boldsymbol{\theta})$ as the outcome probabilities could be calculated directly, as is the case in Section~\ref{sec: mean estimation}, and thus a pseudo-random process could be used to generate identical outcomes for both simulators.)
\end{proof}

As we do not define a particular $\Diamond$, this proof does not make any claims on the efficiency, or accuracy, of the quantum estimation scheme, as $\mathcal{S}_{\Diamond}$ represents what can actually be achieved by the realistic protocol. Indeed, for an encoding (such as using the state $\ket{0}^{\otimes n}$, but with the rotations and measurements of Section~\ref{sec: mean estimation}) which has no phase sensitivity, an appropriate $\Diamond$ could be constructed to satisfy the composable privacy definition of Thm.~\ref{thm: luis defn composably private}, but this would be useless for distributed sensing. On the other hand, metrological quantum advantage in networked sensing is explored in works such as~\cite{proctor2017networked}, and any states presented in these works can then be analysed for their privacy using the framework given here.

\subsection{Quasi-privacy definition: Hassani et al.} \label{sec: majid defn}

We now consider the definition of privacy, and quasi-privacy, presented in~\cite{hassani2025privacy}.

Firstly, complete privacy is also defined in this work in terms of the QFI matrix, where this is achieved if:
\begin{equation} \label{eq: hassani privacy condition}
    \mathcal{Q}_{\mu \nu} \propto a_{\mu} a_{\nu}.
\end{equation}
Returning to the Bures metric, and using the previous argument that we change $\boldsymbol{\theta}$ in the direction $\mathbf{b}$:
\begin{equation}
    D^2_B (\rho(\boldsymbol{\theta}), \rho(\boldsymbol{\theta} + d\boldsymbol{\theta})) = \frac{1}{4} a_{\mu} a_{\nu} b_{\mu} b_{\nu} (d\theta)^2,
\end{equation}
however, as $\mathbf{a}\cdot \mathbf{b} = 0$, this means that $D^2_B (\rho(\boldsymbol{\theta}), \rho(\boldsymbol{\theta} + d\boldsymbol{\theta})) = 0$. Therefore, using the same arguments as the previous section, it is straightforward to see that this definition of privacy is composable.

Imperfect privacy is considered from several perspectives in the paper, in particular focusing on the example of using GHZ states for estimating the mean. 

\begin{definition}[Hassani et al., \cite{hassani2025privacy}] \label{defn: hassani privacy}
    Consider a multi-parameter mean estimation problem, which uses initial state $\rho$, giving the state $\rho(\boldsymbol{\theta})$ after encoding. The $\varepsilon$-privacy may be quantified:
\begin{equation}
\varepsilon = \norm{\partial_\mu \rho(\boldsymbol{\theta}) - \partial_\nu \rho(\boldsymbol{\theta})}_1
\end{equation}
    where $ \partial_\mu = \partial / \partial \theta_\mu$.
\end{definition}

One consequence of this definition is that if local parameters are encoded using the unitary evolutions $\hat{U} (\theta_\mu) = \exp(-i \hat{H}_\mu (\theta_\mu))$, where $\hat{H}_\mu (\theta_\mu)$ is a Hermitian operator that acts non-trivially on the Hilbert space of each (local) quantum sensor, then:
    \begin{equation} \label{eq: hamiltonian}
        \varepsilon = \norm{[\hat{H}'_\mu - \hat{H}'_\nu, \rho]}_1, 
    \end{equation}
where $\hat{H}_\mu' = \partial_\mu \hat{H}_\mu = \partial \hat{H}_\mu / \partial \theta_\mu$.

The definition comes from quantifying the requirement that:
\begin{equation}
    \norm{\partial_\mu \rho(\boldsymbol{\theta}) - \partial_\nu \rho(\boldsymbol{\theta})}_1 \propto |a_\mu - a_\nu | \; \forall \; \mu, \nu,
\end{equation}
which is itself derived from the condition in Eq.~\ref{eq: hassani privacy condition}. Hence, we will consider this more general case, although the details of the derivation of Defn.~\ref{defn: hassani privacy} can be found in~\cite{hassani2025privacy}. Once again, we will consider only the $N=1$ case at this stage. This proof closely follows Thm.~\ref{thm: luis defn composably private}, and hence many details are omitted.

\begin{theorem} \label{thm: hassani defn composably private}
Using the definitions of $\mathcal{S}, \mathcal{R}$ and $\pi$ from Algs.~\labelcref{ideal resource general,concrete resource general,honest protocol general} for $N=1$, if the QFIm of the encoded state $\rho(\boldsymbol{\theta})$ with respect to $\boldsymbol{\theta}$ satisfies:
\begin{equation} \label{eq: hassani proportionality}
    |\mathcal{Q}_{\mu \nu} - k a_\mu a_\nu | \leq \epsilon \; \forall \; \mu, \nu
\end{equation}
for some constant $k$, then there exists some $\Diamond$ such that $\pi_H$ constructs $\mathcal{S}_{\Diamond}$ from $\mathcal{R}_{\pi}$ to within $n\epsilon/\Tr(\mathcal{Q})$.    
\end{theorem}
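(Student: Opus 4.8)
The plan is to follow the proof of Theorem~\ref{thm: luis defn composably private} almost verbatim, replacing only the final eigenvalue-type estimate on the quantum Fisher information matrix. First I would invoke Lemma~\ref{lemma: security}, so that it suffices to bound $\max_{\boldsymbol{\theta}, \boldsymbol{\theta}'} T(\rho(\boldsymbol{\theta}), \rho(\boldsymbol{\theta}'))$ over all $\boldsymbol{\theta}, \boldsymbol{\theta}'$ with $\mathbf{a}\cdot(\boldsymbol{\theta}' - \boldsymbol{\theta}) = 0$, writing $\boldsymbol{\theta}' - \boldsymbol{\theta} \propto \mathbf{b}$ for a unit vector $\mathbf{b}$ orthogonal to $\mathbf{a}$. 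As in Theorem~\ref{thm: luis defn composably private}, I would pass to the rotated basis $\Tilde{\mathcal{Q}} = A^{\text{T}} \mathcal{Q} A$ with $\mathbf{a}, \mathbf{b}$ as its first two columns, so that the entire trace-distance bound is controlled by the single matrix element $\Tilde{\mathcal{Q}}_{22} = \mathbf{b}^{\text{T}} \mathcal{Q} \mathbf{b}$ through the chain $T \leq \sqrt{1 - F}$, Eq.~\ref{eq: fidelity bures}, and the normalised Bures metric $D^2_B = 2\Tilde{\mathcal{Q}}_{22}/\Tr(\mathcal{Q})$.

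The one genuinely new step is bounding $\mathbf{b}^{\text{T}} \mathcal{Q} \mathbf{b}$ using the hypothesis Eq.~\ref{eq: hassani proportionality} in place of positive semidefiniteness. I would write $\mathcal{Q} = k\, \mathbf{a}\mathbf{a}^{\text{T}} + E$, where Eq.~\ref{eq: hassani proportionality} states precisely that $|E_{\mu\nu}| \leq \epsilon$ for all $\mu, \nu$, and where $E$ is symmetric since both $\mathcal{Q}$ and $\mathbf{a}\mathbf{a}^{\text{T}}$ are. Since $\mathbf{a}\cdot\mathbf{b} = 0$, the rank-one part is annihilated, leaving $\Tilde{\mathcal{Q}}_{22} = \mathbf{b}^{\text{T}} E \mathbf{b} = \sum_{\mu\nu} b_\mu E_{\mu\nu} b_\nu$. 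A triangle inequality followed by Cauchy--Schwarz then gives $|\mathbf{b}^{\text{T}} E \mathbf{b}| \leq \epsilon \left(\sum_\mu |b_\mu|\right)^2 \leq \epsilon\, n \norm{\mathbf{b}}_2^2 = n\epsilon$, using that $\mathbf{b}$ is a unit vector; equivalently one notes $\Tilde{\mathcal{Q}}_{22} \leq \norm{E}_{\text{op}} \leq \norm{E}_F \leq n\epsilon$. Substituting into the normalised Bures metric yields $D^2_B \leq 2n\epsilon/\Tr(\mathcal{Q})$, and hence through Eq.~\ref{eq: fidelity bures} a trace distance, and therefore security parameter, of order $n\epsilon/\Tr(\mathcal{Q})$.

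For correctness I would reuse the argument of Theorem~\ref{thm: luis defn composably private} unchanged: the filter $\Diamond$ receives $f(\boldsymbol{\theta})$ from the resource and may reconstruct any state $\rho(\boldsymbol{\theta}')$ with $f(\boldsymbol{\theta}') = f(\boldsymbol{\theta})$, which by the same bound lies within trace distance $n\epsilon/\Tr(\mathcal{Q})$ of the true $\rho(\boldsymbol{\theta})$, so the induced output distributions are indistinguishable to within the same $\varepsilon$.

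The main obstacle I anticipate is this final conversion of $\Tilde{\mathcal{Q}}_{22} \leq n\epsilon$ into the stated security parameter: the exact relation $T \leq \sqrt{1-F}$ with $F = (1 - \tfrac{1}{2} D^2_B)^2$ only yields $T \lesssim \sqrt{2n\epsilon/\Tr(\mathcal{Q})}$, so recovering the cleaner $n\epsilon/\Tr(\mathcal{Q})$ requires either restricting to the small-$\epsilon$ regime and identifying the security parameter with $\tfrac{1}{2} D^2_B$, or invoking the ``up to a constant factor'' qualification attached to this result; pinning down the precise constant and functional form is where care is needed. A secondary subtlety is that the factor of $n$ arising from $\norm{\mathbf{b}}_1 \leq \sqrt{n}\,\norm{\mathbf{b}}_2$ is generically loose, so the bound is unlikely to be tight for structured deviation matrices $E$.
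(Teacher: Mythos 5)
Your proposal follows the paper's proof essentially line for line: the decomposition $\mathcal{Q}_{\mu\nu} = k a_\mu a_\nu + \epsilon_{\mu\nu}$, the observation that $\mathbf{a}\cdot\mathbf{b}=0$ annihilates the rank-one part, the bound $\mathbf{b}^{\text{T}} E \mathbf{b} \leq \epsilon \left(\sum_\mu |b_\mu|\right)^2 \leq n\epsilon$, the normalised Bures metric $D^2_B \leq 2n\epsilon/\Tr(\mathcal{Q})$, the appeal to Lemma~\ref{lemma: security}, and the reuse of the correctness argument from Theorem~\ref{thm: luis defn composably private} are all exactly the steps the paper takes.

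The ``main obstacle'' you flag at the end is not a defect of your write-up relative to the paper --- it is a genuine gap that the paper's own proof shares. The paper passes directly from $D^2_B \leq 2n\epsilon/\Tr(\mathcal{Q})$ to $\max T \leq n\epsilon/\Tr(\mathcal{Q})$ with no intermediate step, but the chain actually used in Theorem~\ref{thm: luis defn composably private} ($T \leq \sqrt{1-F}$ with $F = (1 - \tfrac{1}{2}D^2_B)^2$) gives, writing $x = n\epsilon/\Tr(\mathcal{Q})$, only $T \leq \sqrt{2x - x^2} \approx \sqrt{2x}$, which exceeds $x$ for all $0 < x < 1$ and is not a constant factor away from it. So your assessment is right: as written, the argument supports a security parameter of order $\sqrt{n\epsilon/\Tr(\mathcal{Q})}$ rather than $n\epsilon/\Tr(\mathcal{Q})$, and closing the discrepancy requires either redefining the security parameter in terms of $D^2_B$ itself or accepting the square-root degradation in the theorem statement. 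Your secondary remark, that the factor of $n$ from $\norm{\mathbf{b}}_1^2 \leq n\norm{\mathbf{b}}_2^2$ is generically loose, is also correct but harmless for the validity of the bound.
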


\begin{proof}
\textbf{Security:} Let us say that:
\begin{equation}
    \mathcal{Q}_{\mu \nu} = ka_\mu a_\nu + \epsilon_{\mu \nu},
\end{equation}
where $|\epsilon_{\mu \nu}| \leq \epsilon$. As in the proof of Thm.~\ref{thm: luis defn composably private}, we will consider the QFIm element that corresponds to a change of parameters in the direction of $\mathbf{b}$: $\Tilde{\mathcal{Q}}_{22} = \mathbf{b}^{\text{T}} \mathcal{Q} \mathbf{b}$. We can upper bound this:
\begin{equation}
    \begin{split}
        \Tilde{\mathcal{Q}}_{22} &= \sum_{\mu, \nu} b_\mu \mathcal{Q}_{\mu \nu} b_\nu \\
        &= \sum_{\mu, \nu} b_\mu (k a_\mu a_\nu + \epsilon_{\mu \nu}) b_\nu \\
        &\leq \epsilon \sum_\mu |b_\mu| \sum_\nu |b_\nu| \\
        &\leq n \epsilon.
    \end{split}
\end{equation}

As before, we use the Bures distance:
\begin{equation}
        D^2_B (\rho(\boldsymbol{\theta}), \rho(\boldsymbol{\theta} + d\Tilde{\theta}_2)) = \frac{2 \Tilde{\mathcal{Q}}_{2 2}}{\Tr(\mathcal{Q})} \leq \frac{2 n \epsilon}{\Tr(\mathcal{Q})}.
    \end{equation}
This is normalised to be between 0 and 2 as $\Tr(\mathcal{Q}) \geq \mathbf{b}^{\text{T}} \mathcal{Q} \mathbf{b}$. Therefore:
    \begin{equation}
        \max_{\boldsymbol{\theta}, \boldsymbol{\theta}'}\left(T(\rho(\boldsymbol{\theta}), \rho(\boldsymbol{\theta}'))\right) \leq \frac{n\epsilon}{\Tr(\mathcal{Q})},
    \end{equation}
and hence, using Lemma~\ref{lemma: security}, $\pi_H \mathcal{R} \approx_{n\epsilon/\Tr(\mathcal{Q})} \sigma_H \mathcal{S} \sigma_D$.

\textbf{Correctness:} The correctness proof is identical to the correctness proof of Thm~\ref{thm: luis defn composably private} (although with trace distance $n\epsilon/\Tr(\mathcal{Q})$ in the place of  $\sqrt{1 - \mathcal{P}^2(\mathcal{Q}, \mathbf{a})}$).
\end{proof}

As discussed in~\cite{hassani2025privacy}, the $\epsilon$ error in Eq.~\ref{eq: hassani proportionality} is equivalent to $\varepsilon$ in Eq.~\ref{eq: hamiltonian} up to a constant factor (which can hence be omitted if the distinguishing probability is normalised appropriately).


\section{State verification} \label{sec: state verification}

A central goal of composable definitions of security is being able to combine protocols, as is often necessary in realistic schemes. In this work, we have thus far considered that the concrete resource is a state $\rho$ which is distributed across the network. However, this may not be a reasonable assumption, and in order for parties to trust the security of the scheme, they must carry out state verification~\cite{shettell2022private, unnikrishnan2022verification, shettell2020graph}. 

In general, state verification schemes proceed by receiving a series of states from an untrusted source, and making measurements on these (possibly distributed across a network), potentially discarding some states, and keeping one target copy, which is the state to be used by the proceeding protocol. It is important that the order of the measurements, and in particular which copy is to be used as the target copy, is randomly chosen and kept hidden from the source. Generally these will be stabilisers of the state, and the overall results can be understood as a failure probability, based on the number of measurements which give a result of -1. The results of the measurements  give a confidence window on the quality of the state, typically of the form:
\begin{equation} \label{eq: verification confidence}
    \Pr(F(\rho,\rho^*) \geq 1- \lambda) \geq 1 - \delta,
\end{equation}
where $\rho$ is the intended state, $\rho^*$ is the target copy, and $\lambda, \delta$ may depend on variables including the size of the network, the number of copies used, and the failure rate.

Let us consider the parameter estimation scheme with the additional step that we use a state $\rho^*$ which satisfies Eq.~\ref{eq: verification confidence}, to some intended state $\rho$ which satisfies the condition of Eq.~\ref{eq: lemma}. Firstly, we note that, in the case $F(\rho,\rho^*) \geq 1- \lambda$:
\begin{equation}
    \max_{\boldsymbol{\theta}, \boldsymbol{\theta}'}(T(\rho^*(\boldsymbol{\theta})), \rho^*(\boldsymbol{\theta}')) \leq \varepsilon + \sqrt{\lambda},
\end{equation}
using the triangle inequality applied to the trace distance, the fact that the trace distance is non-increasing over CPTP maps, and the upper bound on the trace distance from the fidelity. From Lemma~\ref{lemma: security}, and following similar arguments to Sections~\ref{sec: luis defn} and~\ref{sec: majid defn} for the full constructive proof, it is straightforward to see that the protocol then has security $\varepsilon + \sqrt{\lambda}$.

The remaining issue is the probability $\delta$ that state verification fails. In this case, we must assume a worst-case scenario, that (following the abstract cryptography methods described previously) a distinguisher would have a distinguishing advantage between this implementation and an ideal protocol of 1. Thus, the overall security of the scheme following this state verification method would be $(1 - \delta)(\varepsilon + \sqrt{\lambda}) + \delta$. The most likely application would be that the intended state $\rho$ is a fully private state (such as a GHZ state in the case of mean estimation), in which case $\varepsilon = 0$.

Now let us assume that we are using states that are locally equivalent to graph states, and thus we can use the composable security of~\cite{colisson2024all}. This work presents abstractions of the state verification scheme (of the two, we will use the simpler resource $\mathcal{V}_{\ket{G}}$, by assuming that it is implemented through a scheme that includes the appropriate compilation step). We must consider the filtered ideal resource, $\mathcal{V}_{\ket{G}}$, where $\ket{G}$ is the desired graph state. The resource has $n+1$ interfaces, where $n$ of these are the interfaces of $P$ that comprise the network, and an interface $S$ corresponding to a source which distributes the quantum state. The protocol implemented by $\mathcal{V}_{\ket{G}}$ is to send the $i$-th qubit of the state $\ket{G}$ to the party $i$. It also includes an abort protocol, which is suppressed by adding the filter $\perp_{P \cup S}$.

The composition of this procedure with parameter estimation is straightforward. We will consider a modification of the concrete resource $\mathcal{R}$ for function estimation, Alg.~\ref{concrete resource general}, but with the first step (distributing the state $\rho$) omitted. Let us call this resource $\mathcal{R}'$. We see that $\mathcal{R}'$ is in fact equivalent to authenticated classical broadcasting across the network. We can compose this in sequence with $\mathcal{V}_{\ket{G}}$, to give the new resource $\mathcal{V}_{\ket{G}}\Vert\mathcal{R}'$. We add the additional functionality of the abort protocol, but as in~\cite{colisson2024all}, this can be omitted w.l.o.g. when considering a distinguisher with access to this functionality.

We would like to show that we can construct $\mathcal{R}$ exactly from $\mathcal{V}_{\ket{G}}\Vert\mathcal{R}'$ (using Defn.~\ref{defn: constructive cryptography}). Firstly, correctness: $ \pi_H \perp_{P \cup S} \mathcal{V}_{\ket{G}}\Vert\mathcal{R}' \approx \pi_H \mathcal{R} $, which immediately follows from our definition of the protocols -- that is, state verification with authenticated broadcast is functionally identical to the resource $\mathcal{R}$ that we considered previously. The security condition $\pi_H \mathcal{V}_{\ket{G}}\Vert\mathcal{R}' \approx \sigma_H \mathcal{R} \sigma_D$ is also straightforward: $\sigma_D$ can be omitted, $\sigma_H = \pi_H$, and we can also add a simulator $\sigma_S$ that receives a state from the source, although we do not need to use this.

 Therefore, if there is a state verification scheme consisting of resource $\mathcal{T}$ and protocol $\tau$ which can construct $\mathcal{V}_{\ket{G}}$ to within $\varepsilon'$, and given that $\mathcal{R}_\pi \xrightarrow{\varepsilon} \mathcal{S}_\Diamond$, where $\mathcal{S}$ represents parameter estimation and other symbols are defined as in Section~\ref{sec: general fn}, then by the triangle inequality, $\mathcal{T} \xrightarrow{\tau, \varepsilon + \varepsilon'} \mathcal{S}_{\Diamond}$ (see e.g.~\cite{portmann2014cryptographic} for further details).

\section{Discussion}

In this work, we have showed the composable security of using quantum networks to estimate functions of private parameters. We gave an explicit proof that the use of GHZ states to estimate the mean of a set of parameters is fully private, as well as showing that the privacy definitions considered in~\cite{bugalho2025private} and~\cite{hassani2025privacy} are both composably secure. Our work allows these two conditions to be compared, and applies to the further results regarding classes of private states considered in~\cite{bugalho2025private}.

Composable security is an important benchmark to allow protocols to be realised with confidence in realistic settings, allowing them to be repeated, and used alongside other protocols. A vital step in secure parameter estimation is state verification, which allows the users to participate in the scheme without having to put their full trust in the source. Using the composable security proof of~\cite{colisson2024all}, we showed how parameter estimation can be combined with state verification.

An obvious question is how to interpret the abstract definition of $\varepsilon$-secure in the context of parameter verification. Loosely speaking, this can be interpreted as the probability of a `failure' of the protocol (see the discussion in~\cite{portmann2014cryptographic}), however, it is not inherently defined in the framework of abstract cryptography how serious this failure is. Therefore, this can be analysed using the game-based paradigm, by considering how failure can occur and what the resulting information leakage would be, and then bounding the probability of this occurring using $\varepsilon$.

For example, given that all parties (including those in $D$) receive all measurement outcomes from the honest parties, there are certain pathological situations (such as if the source distributes the $\ket{+}$ state to all parties) in which this measurement outcome provides a bit of information about a local parameter $\theta_i$ (that is, the dishonest parties can construct estimators for the local parameters, with a variance that depends on the knowledge they can gain through protocol failures). Thus, in practical implementations, there should be an upper limit of the acceptable number of repetitions of the protocol, where the rate of information leaked about private parameters is balanced against the rate of information learned about the target function.

The combination of quantum metrology and cryptography is a burgeoning field, which has great promise in delivering near-term quantum advantage through initiatives such as the quantum internet. Through the framework considered here, we have explored the link between the quantum Fisher information, a standard tool in quantum metrology, and abstract cryptography, through the former's use as a measure of the accessible information of a state. Alongside the use case considered here, this may have more general interest -- given any state that is intended to communicate only information about a particular value, we have shown how the quantum Fisher information can be used as a measure of how that state can be constructed from knowledge of that value, and how this is the central principle of a constructive cryptography proof of security. 

\section{Acknowledgements}
\label{sec: acknowledgements}

Thank you to Léo Colisson, Raja Yehia, Luís Bugalho, and Majid Hassani for useful discussions and thoughtful comments. In particular thank you to LC for suggesting the use of filters which is used in the text, and help in understanding how to interpret early results, and to MH for proofreading the paper. We thank anonymous reviewers for their feedback. Thank you to Santiago Scheiner for his code template which was used to make Fig.~\ref{fig: parameter estimation}. Both authors acknowledge support from the Quantum Internet Alliance (QIA), which has received funding from the European Union’s Horizon 2020 research and innovation programme under grant agreement No 820445 and from the Horizon Europe grant agreements 101080128 and 101102140; the PEPR integrated project EPiQ ANR-22-PETQ-0007 and HQI ANR-22-PNCQ-0002, part of Plan France 2030; and the ANR project QNS ANR-24-CE97-0005-01.

\bibliography{references}
\bibliographystyle{unsrtnat}

\newpage
\appendix

\section{Quantum Fisher information and the Bures metric} \label{sec: QFI appendix}

For review articles relevant to this work, see for example~\cite{liu2020quantum, sidhu2020geometric, paris2009quantum}. Useful details are also given in~\cite{bugalho2025private, hassani2025privacy}.

Consider a random variable, $x$, which can take several different values as determined by a probability distribution, $f(x)$. The probability distribution can be parametrised by another variable $\theta$, hence we label it $f(x, \theta)$. For example (based on the example in Section~\ref{sec: mean estimation}), $x$ may be a binary variable which may take value 0 with probability $\frac{1}{2}(1 + \cos(\theta))$, and value 1 with probability $\frac{1}{2}(1 - \cos(\theta))$. Let's say that $x$ is a thing that we can measure, whereas $\theta$ is a variable that depends on some environmental factor. By sampling $x$, we learn about $f(x, \theta)$ and thus $\theta$.

A distribution that changes more dramatically when $\theta$ is changed will give more information about $\theta$ each time $x$ is sampled. For example, if $\theta$ is restricted to 0 or $\pi$, the distribution above will only require a single sample of $x$ to determine $\theta$. On the other hand, the distribution $g(x, \theta) = \frac{1}{2} + (-1)^x \frac{1}{10} \cos(\theta)$ will give less information about $\theta$ from each sample.

The Fisher information (for classical distributions) is a metric that quantifies the response of the distribution to the parameter:
\begin{equation}
    F(\theta) = \int \Pr(\mathbf{x} | \theta) \left(\partial_\theta \ln[\Pr(\mathbf{x}|\theta)]\right)^2 d\mathbf{x}.
\end{equation}
This can be used to bound the precision with which it is possible to estimate $\theta$ based on $m$ measurements of $x$, which asymptotically approaches the Cramér-Rao bound:
\begin{equation}
    \Delta \theta^2 \geq \frac{1}{mF(\theta)}.
\end{equation}

If the distribution depends on $n$ parameters, $\boldsymbol{\theta}$, we instead represent the information by the $n \times n$ Fisher information matrix, which also contains information about how different parameters are related to each other -- if parameters $\theta_i$ and $\theta_j$ are completely uncorrelated, $F_{ij} = 0$.

When using a quantum process to extract information about a particular parameter from a state, the type of measurement which is used to create the distribution is important, as well as the dynamics used to encode the parameter into a particular probe state. If we assume a particular measurement, this will give rise to a particular distribution of outcomes. Using this, the quantum analogue of the Fisher information can be constructed directly from the classical Fisher information:
\begin{equation}
    \mathcal{I}(\Pr(\mathbf{x}|\theta)) = \int \frac{1}{\Tr[\Pi_\mathbf{x} \rho(\theta)]} \left(\partial_\theta \Tr[\Pi_\mathbf{x} \rho(\theta)]\right) d\mathbf{x},
\end{equation}
which is in terms of a positive-operator valued measurement $\{ \Pi_\mathbf{x} \}_\mathbf{x}$ made onto the encoded state $\rho(\theta)$.

In order to find the amount of information that can theoretically be extracted from the state, we need to optimise over all possible measurements, which leads to the quantum Fisher information:
\begin{equation}
    \mathcal{Q}(\theta) = \Tr(\rho(\theta) L_\theta^2),
\end{equation}
where $L_\theta$ is an operator defined so that:
\begin{equation}
    \partial_\theta \rho(\theta) = \frac{\rho(\theta) L_\theta + L_\theta \rho(\theta) }{2}.
\end{equation}
The quantum Fisher information matrix (QFIm), which is defined regarding the encoded state $\rho(\boldsymbol{\theta})$, is:
\begin{equation}
    \mathcal{Q}_{\mu \nu} (\boldsymbol{\theta}) = \frac{1}{2}\Tr(\rho(\boldsymbol{\theta})(L_{\theta_\mu} L_{\theta_\nu} + L_{\theta_\nu} L_{\theta_\mu})),
\end{equation}
from which we have the quantum Cramér-Rao bound, which bounds the covariance matrix of the estimation of the set of parameters $\boldsymbol{\theta}$:
\begin{equation}
    \text{cov}(\boldsymbol{\theta}) \geq \frac{1}{m} \mathcal{Q}^{-1}(\boldsymbol{\theta}),
\end{equation}
where $m$ is once again the number of measurements.

The ability to achieve the Cramér-Rao bound depends on the measurements made, and this may indeed not be attainable, either due to restrictions on the scheme (for example, if entangling measurements are required), or if the measurement itself depends on $\boldsymbol{\theta}$, which we assume is unknown by the party carrying out the estimation. This is not necessarily the case, and often in multi-parameter estimation the Cramér-Rao bound can be achieved, which gives a quadratic advantage in the efficiency of quantum measurement schemes over classical schemes~\cite{proctor2018multiparameter}.

We can also reparametrise the QFIm: if we write $\mathcal{Q}$ in terms of a rotated set of parameters $\boldsymbol{\theta}'$ which can be expressed in terms of the original parameters $\boldsymbol{\theta}$, then the corresponding QFIm is given by:
\begin{equation}
    \mathcal{Q}(\boldsymbol{\theta}') = B^{\text{T}} \mathcal{Q}(\boldsymbol{\theta}) B,
\end{equation}
where $B_{\mu \nu} = \partial \theta_\mu / \partial \theta'_\nu$.

One interpretation of the QFI is a measure of how the quantum state responds to changes in $\boldsymbol{\theta}$ -- roughly speaking, a greater change in the state requires fewer measurements to notice, and therefore reveals more information about the parameter. This is clear with the connection to the Bures metric~\cite{paris2009quantum,vsafranek2017discontinuities}, which is defined to represent the distances of two states on the quantum manifold under infinitesimal changes of their parameters:
\begin{equation}
    D^2_B (\rho(\boldsymbol{\theta}), \rho(\boldsymbol{\theta} + d\boldsymbol{\theta})) = g_{\mu \nu} d\theta_\mu d\theta_\nu,
\end{equation}
where we have used the (direct) Bures distance:
\begin{equation}
    D^2_B (\rho, \sigma) = 2\left(1 - \sqrt{F(\rho, \sigma)}\right).
\end{equation}
The QFIm is proportional to the Bures metric:
\begin{equation}
    \mathcal{Q}_{\mu \nu} (\rho (\boldsymbol{\theta})) = 4 g_{\mu \nu}.
\end{equation}
To find the Bures distance between two states, it is necessary to integrate the Bures metric along the geodesic between them~\cite{spehner2025bures,spehner2014quantum}.

\end{document}